\newtheorem{theorem}{Theorem}
\newtheorem{corollary}[theorem]{Corollary}
\newtheorem{definition}[theorem]{Definition}
\newtheorem{proposition}[theorem]{Proposition}
\newtheorem{remark}[theorem]{Remark}
\newenvironment{proof}[1][Proof]{\noindent\textbf{#1.} }{\ \rule{0.5em}{0.5em}}
\begin{document}

\title{Statistical Arbitrage in the Black-Scholes Framework}
\author{Ahmet G\"{o}nc\"{u}\thanks{Corresponding author. Email: Ahmet.Goncu@xjtlu.edu.cn}\\
Department of Mathematical Sciences\\
Xian Jiaotong Liverpool University, Suzhou, 215123, China.\\
Affiliation: Bogazici University, Center for Economics and Econometrics, Istanbul, Turkey.\\
Ahmet.Goncu@xjtlu.edu.cn}

\maketitle
\abstract
In this study we prove the existence of statistical arbitrage opportunities in the Black-Scholes framework by considering trading strategies that consists of borrowing from the risk free rate and taking a long position in the stock until it hits a deterministic barrier level. We derive analytical formulas for the expected value, variance, and probability of loss for the discounted cumulative trading profits. Statistical arbitrage condition is derived in the Black-Scholes framework, which imposes a constraint on the Sharpe ratio of the stock. Furthermore, we verify our theoretical results via extensive Monte Carlo simulations.\\
\\
\textbf{Statistical arbitrage, Black-Scholes model, Geometric Brownian motion, Monte Carlo simulation. }
\textbf{JEL Codes:} G12

\section{Introduction}\label{Section:Intro}
Investment communities consider statistical arbitrage as the mispricing of any security according to their expected future trading value in relationship with their spot prices. Statistical arbitrage strategies originally evolved from the so called ``pairs trading'', which exploits the mean reversion in the performance of a pair of stocks identified based on various criteria. Amongst others \cite{Do:2009}, \cite{Gatev:2006}, \cite{Cummins:2012}, \cite{Elliot:2005}, and \cite{Avellaneda:2010} investigate the performance of pairs trading and statistical arbitrage strategies. Optimal statistical arbitrage trading for an Ornstein-Uhlenbeck process is given in \cite{Bertram:2010}. Mathematical definitions for statistical arbitrage strategies are given in the studies by \cite{HoganEtAl:2004}, \cite{Jarrow:2005}, \cite{Jarrow:2012}, and \cite{Bondarenko:2003}. \cite{Bondarenko:2003} assumes the existence of a derivatives markets, however, in this study we do not have such an assumption. Using the definition of statistical arbitrage and with some additional assumptions on the dynamic behavior of statistical arbitrage profits, hypothesis tests for the existence of statistical arbitrage are derived in \cite{HoganEtAl:2004}, \cite{Jarrow:2005}, and \cite{Jarrow:2012}. These hypothesis tests are used to test the existence of statistical arbitrage and efficiency of the market, which avoids the joint hypothesis problem stated in \cite{Fama:1998}. 

In the study by \cite{HoganEtAl:2004} a mathematical definition for statistical arbitrage is given with various examples. Following the definition of \cite{HoganEtAl:2004} and considering the \cite{BS:1973} model, where stock prices follow geometric Brownian motion process, we present examples of statistical arbitrage strategies and prove the existence of statistical arbitrage opportunities. 

If an investor has better information (compared to the market) to identify the stocks with high (or low) expected growth rates, then there exists statistical arbitrage opportunities in the Black-Scholes framework of stock price dynamics. In this paper we present trading strategies that yield statistical arbitrage in the Black-Scholes model and then derive a no-statistical arbitrage condition. The derived no-statistical arbitrage condition imposes a constraint on the Sharpe ratio of stocks. If an investor knows or believes that he knows the stocks that satisfy the statistical arbitrage condition, then this is sufficient to design statistical arbitrage trading strategy.

This article is organised as follows. In the next section we present the definition of statistical arbitrage and provide examples of statistical arbitrage strategies. In Section \ref{Section:StatArbBSModel}, we prove the existence of statistical arbitrage in the Black-Scholes framework and derive the no-statistical arbitrage condition. In Section \ref{Section:OtherProperties}, we present some other properties of statistical arbitrage strategies. We conclude in Section \ref{Section:Conclusion}.

\section{Statistical Arbitrage}\label{Section:StatisticalArbitrage}

Given the stochastic process for the discounted cumulative trading profits, denoted as $\{v(t):t\geq 0\}$ that is defined on a probability space $(\Omega, \mathcal{F}, {P})$ the statistical arbitrage is defined as follows (see \cite{HoganEtAl:2004}).
\begin{definition}\label{Def:StatisticalArbitrage}
A statistical arbitrage is a zero initial cost, self-financing trading strategy $\{v(t):t\geq 0\}$ with cumulative discounted value $v(t)$ such that
\begin{enumerate}
\item $v(0) = 0$
\item $\lim_{t\to \infty} E[v(t)] > 0$,
\item $\lim_{t\to \infty}{{P}(v(t)<0)}=0$, and
\item $\lim_{t\to \infty}{\frac{var(v(t))}{t}}=0$ if ${P}(v(t)<0)>0,\quad$ $\forall t<\infty$. 
\end{enumerate}
\end{definition}

\cite{HoganEtAl:2004} states that the fourth condition applies only when there always exists a positive probability of losing money. Otherwise, if the probability of loss becomes zero in finite time $T$, i.e. $P(v(t)<0)=0$ for all $t\geq T$, this implies the existence of a standard arbitrage opportunity. 

A standard arbitrage opportunity is a special case of statistical arbitrage. Indeed, for a standard arbitrage strategy $V$ (self-financing) there exists a finite time $T$ such that $P(V(t)>0)>0$ and $P(V(t)\geq 0)=1$ for all $t\geq T$ and the proceeds of this profit can be deposited into money market account for the rest of the infinite time horizon. This gives $V(s)=V(T)B_s/B_T$ for $s\geq t$. The discounted value of this strategy is given by $v(s) = V(T)(B_s/B_T)(1/B_s)=v(T)$, which satisfies Definition \ref{Def:StatisticalArbitrage}. \\
\\
\textbf{Example 1: Black-Scholes example of \cite{HoganEtAl:2004}}\label{Ex:1} ($\alpha>r_f$): 

Consider the standard Black-Scholes dynamics for stock price (non-dividend paying) $S_t$ that evolves according to 
\begin{equation}
S_t = S_0 \exp((\alpha-\sigma^2/2)t + \sigma dW_t), 
\end{equation}
where $W_t$ is standard Brownian motion process, $\alpha$ is the growth rate of the stock price, $r_f$ is the risk-free rate, and $\sigma$ is the volatility, which are assumed to be constant. Following \cite{HoganEtAl:2004} we consider $\alpha>r_f$. Money market account follows $B_t = \exp(r_f t)$. \cite{HoganEtAl:2004} considers a self-financing trading strategy that consists of buying and holding one unit of stock financed by the money market account with constant risk-free rate $r_f$. 

The value of the cumulative profits at time $t$ is 
\begin{equation}
V(t) = S_t - S_0 e^{r_f t}=S_0( e^{(\alpha-\sigma^2/2)t +\sigma W_t } - e^{r_f t}),
\end{equation}
whereas the discounted cumulative value of the trading profits is given as 
\begin{equation}
v(t) = S_0 \left( \exp( (\alpha-\sigma^2/2 - r_f)t + \sigma W_t) -1 \right).
\end{equation}
Since $W_t \sim N(0,t)$ for each $t$ we conclude that

\begin{equation}
E[v(t)] = S_0 \left( e^{(\alpha-r_f)t } -1 \right)
\end{equation}
and $\lim_{t\to \infty} E[v(t)] = \infty$.
We obtain the variance as 

\begin{equation}\label{Eq:Variance1}
var(v(t)) = \left( e^{\sigma^2 t} -1 \right) S_0^2 \left( e^{2(\alpha - r_f)t}\right) \to \infty \hbox{ as } t\to \infty.
\end{equation}
Therefore, Condition 4 in Definition \ref{Def:StatisticalArbitrage} is not satisfied. \\
\\
Using the above example, Hogan et al. (2004) concludes that the Black-Scholes model excludes statistical arbitrage opportunities according to Definition \ref{Def:StatisticalArbitrage}. This example is also used to justify the existence of a fourth condition in the Definition \ref{Def:StatisticalArbitrage}. It is mentioned that without the fourth condition buy and hold strategies yield statistical arbitrage opportunities in the Black-Scholes model for $\alpha-r_f>\sigma^2/2$. In the next example we show that the fourth condition in Definition \ref{Def:StatisticalArbitrage} can be satisfied with a different trading strategy. But first, as a natural result of Example 1, we state the following proposition. 

\begin{proposition}\label{Proposition1}
For all the buy and hold trading strategies consisting of a single stock in the Black-Scholes model, the time averaged variance of the discounted cumulative trading profits goes to infinity, i.e. $\lim_{t\to\infty}var(v(t))/t =\infty$, for $\alpha-r_f>-\sigma^2/2$, and it decays to zero for $\alpha-r_f\leq -\sigma^2/2$.
\end{proposition}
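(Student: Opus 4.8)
The plan is to work directly from the variance formula already obtained in Example~1, since the hard analytical work has been done there. Any buy-and-hold strategy in a single stock consists of holding a fixed quantity $n>0$ of shares financed from the money market account; this merely rescales the cumulative profit $V(t)$, and hence the discounted profit $v(t)$, by the constant $n$. Consequently $var(v(t))$ is multiplied by $n^2$, a positive constant that does not affect whether $var(v(t))/t$ converges to zero or diverges to infinity. It therefore suffices to analyze the single-unit strategy of Example~1, whose variance is given in \eqref{Eq:Variance1}.

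Next I would divide that expression by $t$ and isolate the dominant exponential. Writing
\begin{equation}
\frac{var(v(t))}{t}=\frac{S_0^2}{t}\left(e^{(\sigma^2+2(\alpha-r_f))t}-e^{2(\alpha-r_f)t}\right),
\end{equation}
one sees that the asymptotics are governed entirely by the sign of the exponent $\sigma^2+2(\alpha-r_f)$ of the leading term, since $\sigma^2>0$ makes this exponent strictly larger than $2(\alpha-r_f)$. The condition $\sigma^2+2(\alpha-r_f)>0$ is equivalent to $\alpha-r_f>-\sigma^2/2$, which is exactly the dichotomy stated in the proposition.

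Finally I would carry out the case analysis. If $\alpha-r_f>-\sigma^2/2$, the leading exponent is positive, so $e^{(\sigma^2+2(\alpha-r_f))t}/t\to\infty$ as exponential growth dominates the linear factor $t$, while the subtracted term is of strictly smaller order; hence $var(v(t))/t\to\infty$. If $\alpha-r_f<-\sigma^2/2$, both exponents are negative, so each exponential decays and the factor $1/t$ only accelerates the decay, yielding the limit $0$. The one point requiring a little care, and the closest thing to an obstacle in an otherwise routine argument, is the boundary case $\alpha-r_f=-\sigma^2/2$: there the leading exponent vanishes, so the first term contributes $S_0^2/t$ rather than an exponential, and one must note that the remaining term $S_0^2 e^{-\sigma^2 t}/t$ also tends to zero; together these give $var(v(t))/t\to 0$. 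This places the boundary in the decaying regime, matching the stated $\alpha-r_f\le-\sigma^2/2$ case and completing the argument.
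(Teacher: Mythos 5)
Your proposal is correct and follows essentially the same route as the paper: both reduce the claim to the variance formula \eqref{Eq:Variance1} and read off the dichotomy from the sign of the dominant exponent $\sigma^2+2(\alpha-r_f)$. You are in fact more careful than the paper, since you justify the reduction to the one-unit strategy via the $n^2$ scaling and treat the boundary case $\alpha-r_f=-\sigma^2/2$ explicitly, which the paper's ``then clearly'' elides.
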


\begin{proof}
The long position in the risky asset has present value $e^{-r_f t}S_t$. Since the money market account is deterministic, the variance of our trading profits depends only on the investment in the risky asset $S_t$. Hence, the variance of the random component $S_t$, which is given in Equation \ref{Eq:Variance1} always has the exponential term in the order of $e^{2(\frac{\sigma^2}{2} + \alpha - r_f)t}$. Then clearly $\lim_{t\to\infty}{var(v(t))/t}=\infty$ for $\alpha-r_f > -\sigma^2/2$ and decays to zero for $\alpha-r_f\leq -\sigma^2/2$.
\end{proof}

For $\alpha-r_f\leq-\sigma^2/2$, we can create statistical arbitrage by short selling the stock and investing in the money market account at time 0 and renewing the short position and re-investing profits (or re-financing losses) in the money market account. Since the expected cumulative discounted trading profits converge to $S_0$ and the time averaged variance decays to zero, this yields statistical arbitrage.
 
It is important to note that to create statistical arbitrage opportunities for stocks with sufficiently large positive expected growth rates, we need to impose a stopping or selling condition on the trading strategy. Without a stopping boundary, we keep holding the stock and by Proposition \ref{Proposition1} we fail to satisfy the condition that the time average of the variance must decay to zero. 

If we successfully introduce this selling or stopping condition in our trading strategies, we profit from the positive expected growth in the stock price, but at the same time we can reduce the holding of the risky asset (reduce the holding of the risky asset to zero in time) and control the time averaged variance. Next, we present this idea with an example.\\
\\
\textbf{Example 2:}\label{Ex:2} 
We introduce a termination condition for the buy and hold strategy as follows: whenever the stock price process hits to a constant barrier level sell the stock and invest in the money market account. In this way, we utilize the finiteness of the first passage time of the Brownian motion process. 

The discounted cumulative trading profits in our ``buy and hold until barrier'' strategy is given by 
\begin{equation}
v(t) = \begin{cases}
Be^{-r_f t^*} - S_0, & \hbox{ if } t^*\leq t \\
S_t e^{-r_f t} - S_0, & \hbox{ else},
\end{cases}
\end{equation}
where $t^* = \min\{t\geq 0: S_t = B\}$ and $B>S_0$ is the constant barrier level. 

If the stock price hits the barrier level at infinity then the trading loss is $S_0$, whereas if it hits in finite time, then we have $Be^{-r_f t^*}-S_0$ as the discounted value of our trading strategy. \\
\\
\\
Consider the Brownian motion process with drift given by 
\begin{equation}\label{Eq:BMwithDrift}
X_t = \mu t + W_t, 
\end{equation}
where $W_t$ is the standard Brownian motion, $\mu\in\Re$, and denote the first passage time of this process as
\begin{equation}
\tau_m = \min\{t\geq 0: X(t) = m\}
\end{equation}
for fixed $m$. 

The Brownian motion with drift, $X_t$, hits the level $m$ in finite time almost surely for $\mu>0$, i.e. $P(\tau_m<\infty)=1$. The Laplace transform of the first passage time for $X_t$ is equal to (see \cite{Shreve:2004} page 120)
\begin{equation}\label{Eq:ShreveFormula}
E[e^{-r \tau_m}] = e^{m\mu - m\sqrt{2r_f + \mu^2}}, \quad \mbox{ for all } r>0.
\end{equation}
In Equation \ref{Eq:ShreveFormula}, put $X_t = \ln(S_t/S_0)/\sigma$, $\mu=(\alpha-\sigma^2/2)/\sigma$ and $m=\ln(B/S_0)/\sigma$. Then the process for $X_t=\mu t + W_t$ is the same as 
\begin{equation}
S_t = S_0 \exp((\alpha-\sigma^2/2)t + \sigma W_t), 
\end{equation}
as it is under the Black-Scholes model. \\
\\
\textbf{Expected value of the trading profits is positive (if $\alpha>r_f$):}\\ 
Note that the right hand side of Equation \ref{Eq:ShreveFormula} can be equivalently written as
$\left(\frac{B}{S_0}\right)^{(\mu - \sqrt{2r_f + \mu^2})/\sigma}$. The result in Equation \ref{Eq:ShreveFormula} yields the following formula for the expected trading profits for sufficiently large $t$:
\begin{equation}\label{Eq:ExpectedValue}
E[v(t)] = B \left(\frac{B}{S_0}\right)^{(\mu - \sqrt{2r_f + \mu^2})/\sigma}  - S_0
\end{equation}
with positive expected trading profits
\begin{equation}\label{Eq:Profit}
\lim_{t\to\infty}{E[v(t)]} >0,
\end{equation}
for $\mu>0$ and $B>S_0$. 

\begin{proposition}
If $\mu>0$ and $m>0$, then the limit of the expected discounted profits in the trading strategy given in Example \ref{Ex:2} is always positive:
\begin{equation}\label{Eq:Profit2}
\lim_{t\to\infty}{E[v(t)]} =  B e^{m\mu - m\sqrt{2r_f + \mu^2}} - S_0>0.
\end{equation} 
\end{proposition}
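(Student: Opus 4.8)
The plan is to establish the equality first and then verify the strict positivity as a separate inequality, since the equality is essentially a substitution into the already-derived expected value formula while the positivity requires a genuine estimate on the exponent.

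\medskip

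\noindent First I would justify the equality $\lim_{t\to\infty}E[v(t)] = B e^{m\mu - m\sqrt{2r_f+\mu^2}} - S_0$. The idea is that as $t\to\infty$ the stopping behavior of the strategy becomes fully resolved: since $\mu>0$ we have $P(\tau_m<\infty)=1$, so with probability one the stock eventually hits the barrier $B$ in finite time and the position is liquidated into the money market account, freezing the discounted value at $Be^{-r_f\tau_m} - S_0$. Taking expectations and using that $\tau_m$ corresponds to the first passage of $X_t = \mu t + W_t$ to level $m$, the limiting expected profit is $E[Be^{-r_f\tau_m} - S_0] = B\,E[e^{-r_f\tau_m}] - S_0$. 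Applying the Laplace transform formula in Equation \ref{Eq:ShreveFormula} with $r=r_f$ gives $E[e^{-r_f\tau_m}] = e^{m\mu - m\sqrt{2r_f+\mu^2}}$, which yields the stated closed form and matches Equation \ref{Eq:ExpectedValue} under the substitution $m = \ln(B/S_0)/\sigma$. I would be slightly careful here to note that the interchange of limit and expectation is legitimate because on the event $\{\tau_m<\infty\}$ (probability one) the discounted value is eventually constant in $t$ and bounded by $B - S_0$ in absolute value by a crude estimate, so dominated convergence applies.

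\medskip

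\noindent The substantive step is the strict inequality $B e^{m\mu - m\sqrt{2r_f+\mu^2}} - S_0 > 0$, equivalently $B/S_0 > e^{-m(\mu - \sqrt{2r_f+\mu^2})} = e^{m(\sqrt{2r_f+\mu^2} - \mu)}$. I would rewrite $B/S_0$ in terms of $m$: since $m = \ln(B/S_0)/\sigma > 0$ we have $B/S_0 = e^{\sigma m}$, so after taking logarithms and dividing by the positive quantity $m$, the claim reduces to the clean inequality $\sigma > \sqrt{2r_f+\mu^2} - \mu$. The hard part — though it is really just algebra once framed correctly — is verifying this last inequality. I would substitute $\mu = (\alpha - \sigma^2/2)/\sigma$ and $r_f$ back in, clear the square root by noting $\sqrt{2r_f+\mu^2} - \mu > 0$ reduces to bounding $2r_f + \mu^2$ against $(\mu+\sigma)^2 = \mu^2 + 2\mu\sigma + \sigma^2$; the inequality $\sqrt{2r_f+\mu^2} < \mu + \sigma$ (valid when $\mu+\sigma>0$) is equivalent to $2r_f < 2\mu\sigma + \sigma^2$, and plugging in $\mu\sigma = \alpha - \sigma^2/2$ gives $2\mu\sigma + \sigma^2 = 2\alpha$, so the condition becomes exactly $r_f < \alpha$, which holds by the standing assumption $\alpha>r_f$ of Example \ref{Ex:2}.

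\medskip

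\noindent Thus the whole positivity claim collapses to the hypothesis $\alpha > r_f$, and the only genuine obstacle is keeping track of the sign conditions when squaring: I must confirm that $\mu + \sigma > 0$ so that squaring the inequality $\sqrt{2r_f+\mu^2} < \mu+\sigma$ is reversible. This holds because $\mu + \sigma = (\alpha - \sigma^2/2)/\sigma + \sigma = (\alpha + \sigma^2/2)/\sigma > 0$ whenever $\alpha > -\sigma^2/2$, which is automatic under $\alpha > r_f \geq 0$. With that sign check in place, the chain of equivalences is tight and the strict inequality follows, completing the proof.
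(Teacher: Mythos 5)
Your argument is correct, but it reaches the strict inequality by a different route than the paper. For the equality both of you do the same thing: invoke $P(\tau_m<\infty)=1$ for $\mu>0$, pass the limit inside the expectation by dominated convergence, and identify $B\,E[e^{-r_f\tau_m}]-S_0$ with the closed form from the Laplace transform (your domination constant should be $\max(B,S_0)$ rather than $B-S_0$, since $v(t)\geq -S_0$ before the hit, but that is cosmetic). For the positivity, the paper never touches the square root: it reruns the martingale argument with the specific exponent $\theta=\sigma$, for which $\theta\mu+\theta^2/2=\alpha$, obtaining the exact identity $B\,E[e^{-\alpha\tau_m}]=S_0$, and then concludes by monotonicity of the discount factor, $E[e^{-r_f\tau_m}]>E[e^{-\alpha\tau_m}]$ for $\alpha>r_f$. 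You instead evaluate the Laplace transform at $r=r_f$ and verify the exponent inequality $\sigma>\sqrt{2r_f+\mu^2}-\mu$ directly, squaring after the sign check $\mu+\sigma>0$ and reducing everything to $r_f<\alpha$. The two are really the same computation in disguise (one can check $\mu-\sqrt{2\alpha+\mu^2}=-\sigma$), but the paper's choice of $\theta$ buys an algebra-free proof with a clean financial reading (discounting at $\alpha$ gives exactly break-even, so discounting at the lower rate $r_f$ gives a profit), while your route buys an explicit equivalence: the limit profit is positive if and only if $\alpha>r_f$. Both proofs therefore rely on the standing assumption $\alpha>r_f$ of the example rather than only on the stated hypotheses $\mu>0$, $m>0$; you make that dependence explicit, which is a genuine clarification of the proposition as stated.
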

\begin{proof}
Consider the stochastic process given by $$X(t) = \mu t + W(t),$$ where $W(t)$ is standard Brownian motion process and we have $X(t)$, $\tau_m$, and $m$ as defined before. Following the similar steps as given in \cite{Shreve:2004} page 111, we start by writing the martingale process $Z(t)$ as 
\begin{equation}
Z(t) = \exp(\sigma X(t) - (\sigma \mu+\sigma^2/2)t ), 
\end{equation}
which is clearly an exponential martingale with $Z(0)=1$ and utilizing this fact we obtain 
\begin{equation}
E\left[\exp(\sigma X(t\wedge \tau_m) - (\sigma \mu+\sigma^2/2)(t\wedge \tau_m))\right]=1,\quad t\geq 0.
\end{equation}
For $\sigma>0$ and $m>0$ we know that $0\leq \exp(\sigma X(t\wedge \tau_m))\leq e^{\sigma m}$. If $\tau_m<\infty$, we have $\exp(-(\sigma\mu+\sigma^2/2)(t\wedge\tau_m)) =\exp(-(\sigma\mu+\sigma^2/2)\tau_m)$ for large enough $t$, whereas if $\tau_m=\infty$, we have $\exp(-(\sigma \mu+\sigma^2/2)(t\wedge \tau_m)) = \exp(-(\sigma \mu+\sigma^2/2) t )$ and the exponential term converges to zero. We can write these two cases together 
\begin{equation}
\lim_{t\to\infty}{\exp(-(\sigma \mu+\sigma^2/2)(t\wedge \tau_m))} = \exp(-(\sigma \mu+\sigma^2/2)\tau_m),
\end{equation}

We recall that due to Equation \ref{Eq:ShreveFormula} the first passage time is almost surely finite, i.e. $P(\tau_m<\infty)=1$, and for sufficiently large $t$ we have $\exp(\sigma X(t\wedge \tau_m)=\exp(\sigma X(\tau_m))=\exp(\sigma m)=B/S_0$.

Writing the product of two exponential terms as 
\begin{equation}
\lim_{t\to\infty}{\exp(\sigma X(t\wedge\tau_m) - (\sigma \mu+\sigma^2/2)(t\wedge \tau_m)} = \exp(\sigma m-(\mu+\sigma^2/2)\tau_m)
\end{equation}
and interchanging the limit and expectation by the dominated convergence theorem, we obtain
\begin{equation}
1 = E[\exp(\sigma m-(\sigma \mu+\sigma^2/2)\tau_m)],
\end{equation}
where $\mu = (\alpha - \sigma^2/2)/\sigma$. This implies
\begin{equation}\label{Eq:1}
E[\exp(-\alpha\tau_m)] = e^{-m\sigma}=\frac{S_0}{B}.
\end{equation}
Note that $\lim_{t\to\infty}v(t) = Be^{-r_f\tau_m}-S_0$ and $\lim_{t\to\infty}E[v(t)]=E[\lim_{t\to\infty}v(t)]=BE[e^{-r_f\tau_m}]-S_0,$ where $B E[e^{-r_f\tau_m}]-S_0 > BE[e^{-\alpha \tau_m}] - S_0 = 0$ by Equation \ref{Eq:1} for $\alpha>r_f>0$ and thus proves the positivity of the discounted cumulative profits. 
\end{proof}\\
\\
\textbf{Variance of the trading profits:}\\
Next we derive the analytical formula for the variance of our trading profits. 
For sufficiently large $t$ we write 
\begin{equation}
var(v(t)) = E[v^2(t)] - E[v(t)]^2, 
\end{equation}
where 
\begin{eqnarray}
\lim_{t\to\infty}E[v^2(t)] & =&  E[\lim_{t\to\infty} v^2(t)]\\
& = & B^2 E[e^{-2 r_f \tau_m}] -2B S_0 E[e^{-r_f \tau_m}]+S_0^2.
\end{eqnarray}
Therefore, the limit of the variance of cumulative discounted trading profits is given as 
\begin{equation}\label{Eq:Variance}
\lim_{t\to\infty}var(v(t)) = B^2\left[\left(\frac{B}{S_0}\right)^{ (\mu-\sqrt{4r_f + \mu^2})/\sigma }-\left(\frac{B}{S_0}\right)^{( 2\mu-2\sqrt{2r_f + \mu^2})/\sigma} \right].
\end{equation}
\\
\textbf{First passage time:}\\
An investor implementing the trading strategy would also be interested in knowing the distribution of the first passage time and timing to sell the stock. If the conditions $\alpha>\sigma^2/2$ and $B>S_0$ are satisfied then the first passage time of the Brownian motion with drift in Equation \ref{Eq:BMwithDrift} to level $m=\log(B/S_0)/\sigma$ is given by
\begin{equation}\label{Eq:InverseGaussian}
\tau_{m}\sim \hbox{IG}\left( \frac{m}{\mu},m^2\right), 
\end{equation}
where IG represents inverse Gaussian distribution\\
\\
\textbf{Change in expected profits with respect to the barrier level:}\\
To see the effect of an increase in the barrier level $B$ on the stochastic discount rate and trading profits, we take the partial derivatives of Equations \ref{Eq:ShreveFormula} and \ref{Eq:ExpectedValue}. We obtain
\begin{equation}
\frac{\partial E[e^{-r_f \tau_B}] }{\partial B} = \frac{(\mu-\sqrt{2 r_f + \mu^2})}{\sigma} \frac{E[e^{-r_f \tau_B}]}{B}<0,
\end{equation}
which is negative as expected, since an increase in the barrier level means we hit the barrier at a later time on average and the present value of one dollar obtained at the hitting time decreases. 

However, the partial derivative of the trading profits (for sufficiently large $t$) with respect to the barrier level $B$ is positive. Differentiating Equation \ref{Eq:ExpectedValue} with respect to $B$ we obtain
\begin{equation}
\frac{\partial {E[v(t)]}}{\partial B} = \frac{(\sigma+\mu - \sqrt{2r_f + \mu^2})}{\sigma} \left(\frac{B}{S_0}\right)^{(\mu-\sqrt{r_f + \mu^2})/\sigma } > 0, \quad \hbox{ for } \alpha>r_f.
\end{equation}

The above results show that the expected profits increase as the barrier level increases when $t$ is sufficiently large, whereas a higher barrier level means that on average we need to wait longer time to observe the variance to be bounded. In other words, while our expected profit increases, we suffer from a higher level of variance for any time $t$. 
\\
\\
To have a better understanding of Equation \ref{Eq:ExpectedValue}, in Figure \ref{Fig:Mesh1} we plot the percentage profits obtained from our strategy with respect to different values of $\alpha$ and $\sigma$. We observe that the rate of increase in profits is higher with respect to an increase in the $\alpha$ of the stock when the volatility is high. This is in line with our intuition since at low levels of volatility, the stock price paths are already hitting the barrier without much deviation from their expected growth rates. 

\begin{figure}
\begin{center}
\caption{Expected profits as a function of $\alpha$ and $\sigma$ of the stock (assuming $\alpha>r_f=0.05$, $S_0=1$, $B=1.3$)}
\includegraphics[
height=2.50in,
width=5.0in
]%
{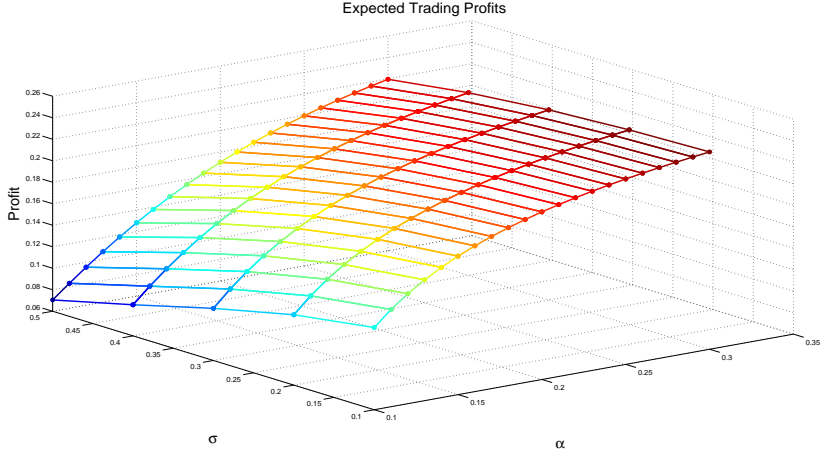}%
\label{Fig:Mesh1}%
\end{center}
\end{figure}

It is also clear that since for sufficiently large $t$, $P(\tau_B<\infty)=1$, then the variance is going to be bounded for the strategy that almost surely terminates in finite time. Therefore, we conclude that $\lim_{t\to\infty}{var(v(t))/t}=0$.\\
\\
\textbf{Probability of loss:}\\
The limit of the probability of loss is given by 
\begin{equation}\label{Eq:ProbLossEx2}
\lim_{t\to\infty}P(v(t)<0) =P(\tau_m > \sigma m/r_f) = 1-F_{IG}(\sigma m/r_f)
\end{equation}
where $F_{IG}$ is the cdf of the Inverse Gaussian distribution. Note that the probability of loss does not decay to zero in this strategy, which violated the definition of statistical arbitrage. Next, we present the results from our Monte Carlo experiment to verify our conclusions in this example. \\
\\
\textbf{Monte Carlo experiment for Example \ref{Ex:2}:}\\
Let's consider the strategy described in Example 2, where $\alpha=0.16$, $r_f=0.04$, and volatility $\sigma=0.2$. We borrow from the risk free rate and long one unit of stock at time 0. We set $S_0=1$ and the barrier equals to $1.2$. We simulate 10,000 paths of the the daily stock prices with the number of time steps $M=252$ (i.e. $\Delta t=1/252$). We implement our trading strategy terminating whenever we hit the barrier \$1.2 and investing all immediately to the money market account. Therefore, once the stock price hits the barrier, the variance becomes zero and our profits grow at the risk free rate of $r_f$.

The empirical distribution of discounted cumulative profits can be seen in Figure \ref{Fig:1} for investment horizons of one, two, five, ten, twenty, and fifty years. The empirical distribution converges with a bounded variance, but the probability of loss does not decay to zero as we expect. As presented in Figure \ref{Fig:2}, Monte Carlo simulation results are consistent with the theoretical results and the time averaged variance decays to zero as expected. 

\begin{figure}
\begin{center}
\caption{Monte Carlo simulation of discounted cumulative trading profits of statistical arbitrage strategy with respect to the time strategy is implemented. Parameters: $S_0=1$, $B=1.2$, $\alpha=0.16$, $r_f=0.04$, $\sigma=0.2$, $N=10000$, $M=252$}
\includegraphics[
height=2.50in,
width=5.20in
]%
{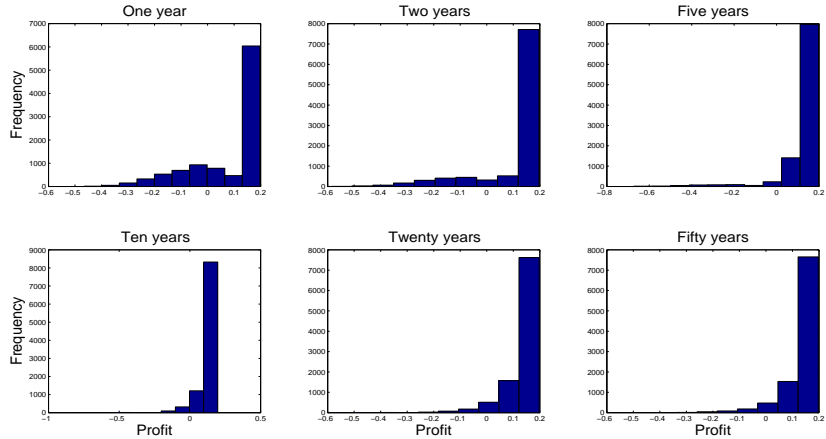}%
\label{Fig:1}%
\end{center}
\end{figure}

\begin{figure}
\begin{center}
\caption{Evolution of mean, time averaged variance, and probability of loss for the given trading strategy in Example \ref{Ex:2}.}
\includegraphics[
height=1.50in,
width=5.20in
]%
{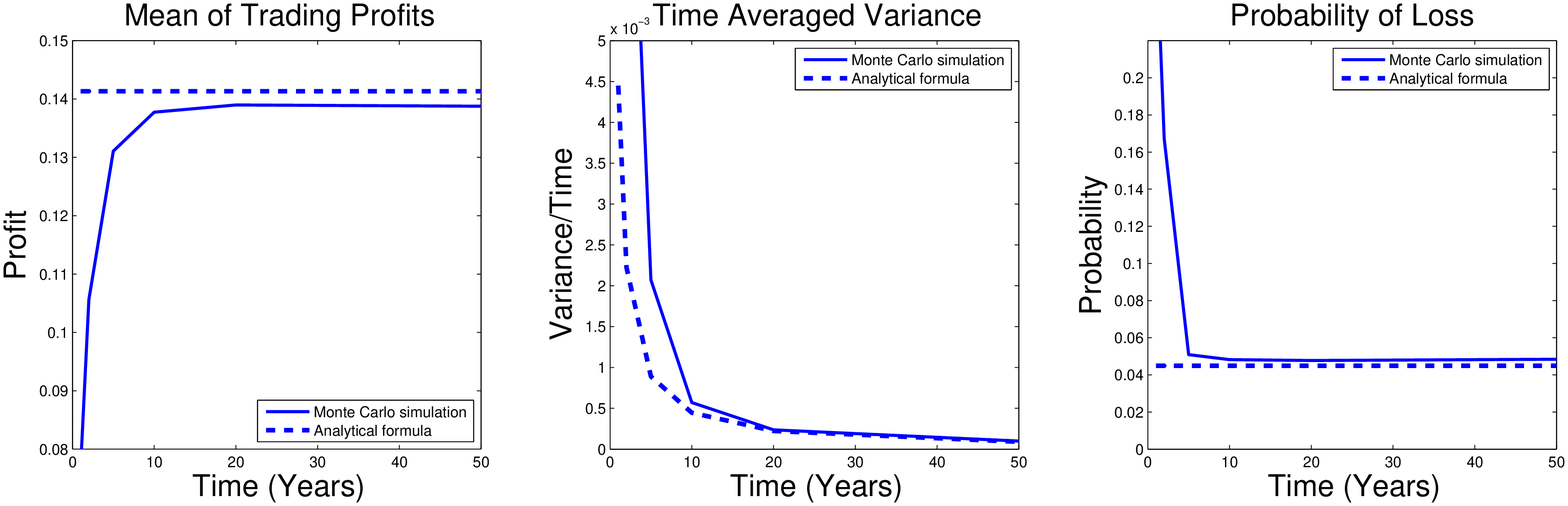}%
\label{Fig:2}%
\end{center}
\end{figure}

\begin{figure}
\begin{center}
\caption{First passage time density $\tau_B$ with respect to time}
\includegraphics[
height=1.60in,
width=4.0in
]%
{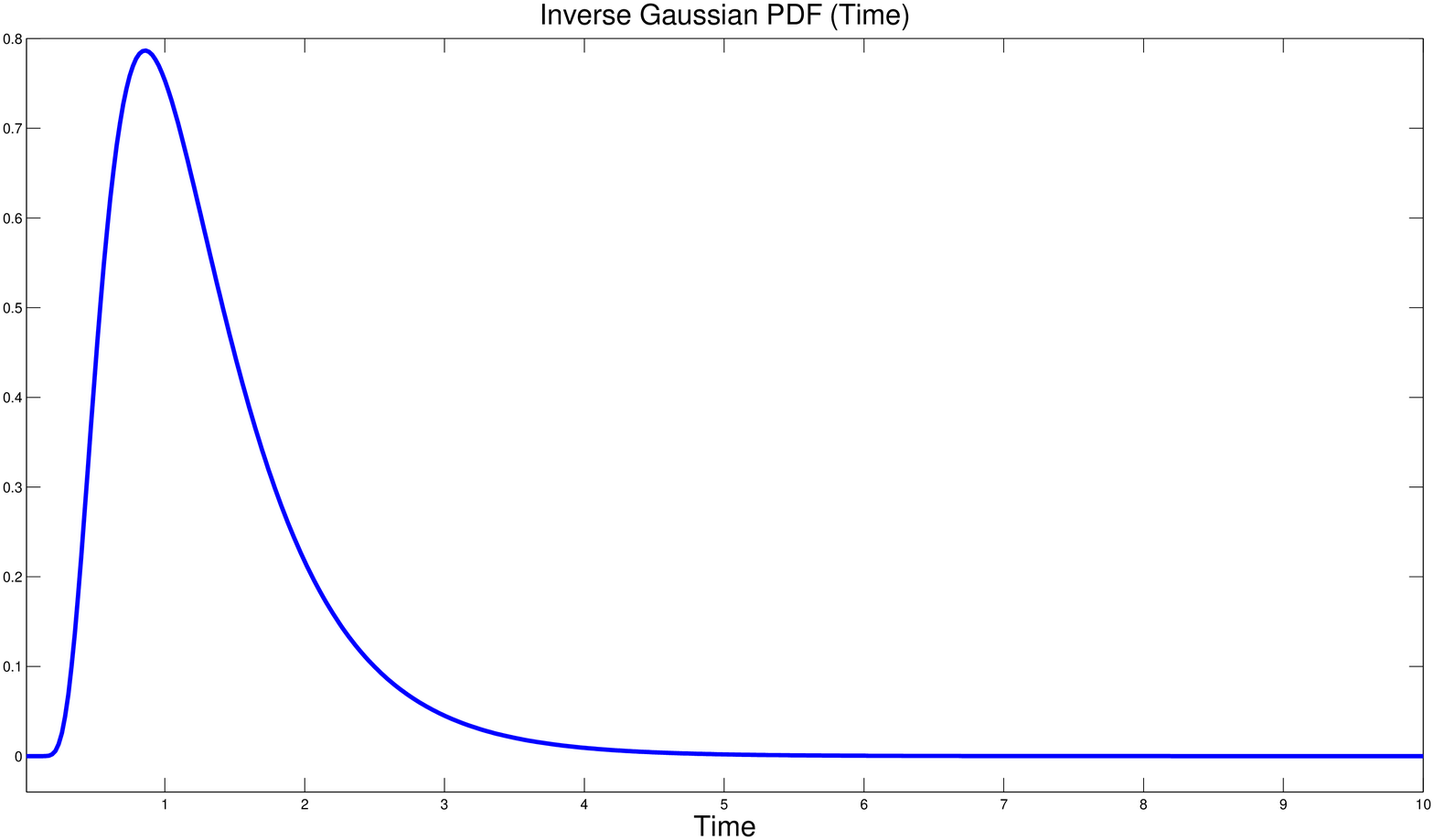}%
\label{Fig:3}%
\end{center}
\end{figure}

Figure \ref{Fig:3} plots the density of first passage time. We observe that the first passage time density decays to zero rapidly as a function of time, and our trading strategy terminates with very high probability for sufficiently large investment horizons. However, in this example there always exists stock price paths that hits the barrier too late to yield positive profit. In our next example we modify the trading strategy to obtain convergence to zero probability of loss.\\
\\
\textbf{Example 3:}\label{Ex:3} (``Buy and Hold Until Barrier'')\\
Different from the previous example, in this example we utilize a deterministic stopping boundary and show that statistical arbitrage can be obtained. Our trading strategy is as follows: at time 0 we long one unit of stock by borrowing from the bank. If the stock price hits $S_0(1+k)e^{r_f t}$ we sell, realizing the profit of $k$, and invest immediately in the money market account. This strategy is demonstrated in Figure \ref{Fig:TradingRule}, where we simulate 10 daily stock price paths for one year, and $k$ is $0.05$. 
 
\begin{figure}
\begin{center}
\caption{Demonstration of the trading rule in Example 3 with simulated paths of geometric Brownian motion. Sell if the stock price hits to the stopping boundary (Parameters: $r_f=0.04$, $\alpha=0.16$, $\sigma=0.2$)}
\includegraphics[
height=2.00in,
width=5.0in
]%
{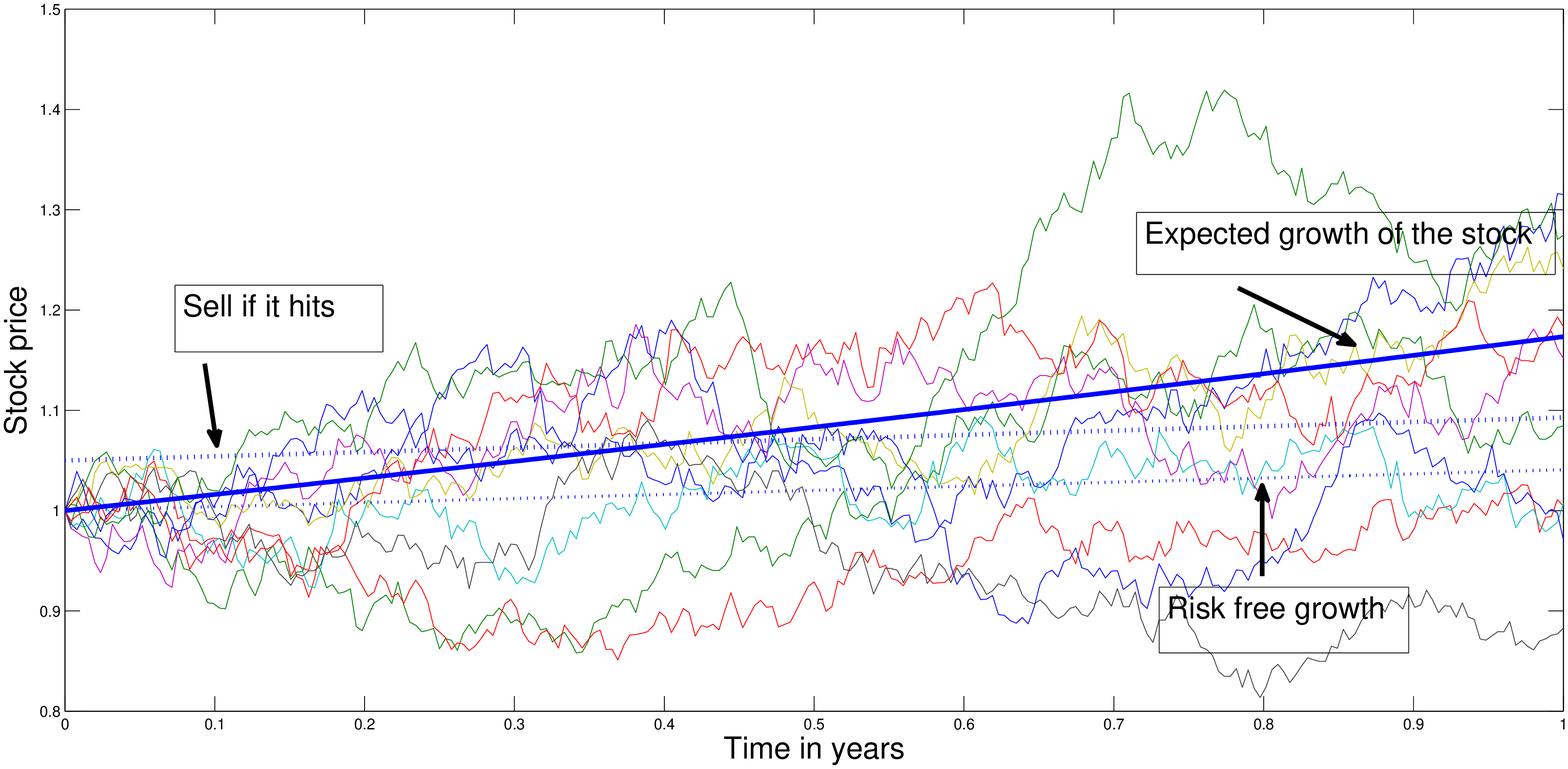}%
\label{Fig:TradingRule}%
\end{center}
\end{figure}

The discounted cumulative trading profits from this strategy can be written as 
\begin{equation}
v(t) = \begin{cases}
S_0 k & \hbox{ if } \tau_{k} \in [0,t]  \\
S_te^{-r_f t} - S_0 & \hbox{ else } 
 \end{cases}, 
\end{equation}
where 
\begin{equation}
\tau_k = \min\{t \geq 0: S_t = S_0 (1+k) e^{r_f t}\}.
\end{equation}

If the stock price hits the barrier level $B=S_0(1+k)e^{r_f t}$, then we can equivalently write this event as the Brownian motion with drift hitting $\ln(1+k)/\sigma$. The barrier level $k^*$ for the Brownian motion with drift $X_t$ is given as $k^*=\ln(1+k)/\sigma$, since $\ln(e^{-r_f t}S_t/S_0)/\sigma = \ln(1+k)/\sigma$. In this example we consider the discounted stock price process and we can write
\begin{align}\label{Eq:BMwithDrift2}
X_t = & \mu t + W_t\\\label{Eq:BMwithDrift3}
\tau_{k^*} = & \min\{ t\geq 0: X_t = k^{*}\}
\end{align}
for $k^*=\ln(1+k)/\sigma>0$ and $\mu = (\alpha-r_f -\sigma^2/2)/\sigma$. 

Similar to the previous example, we conclude that the first passage time for the Brownian motion with drift is Inverse Gaussian distributed as 
\begin{equation}\label{Eq:InverseGaussian2}
\tau_{k^{*}}\sim \hbox{IG}\left( \frac{k^*}{\mu},{k^*}^2\right). 
\end{equation} \\
\\
\textbf{Expected value and variance of trading profits: }\\
Given $X_t$ with $\mu>0$, we have $P(\tau_{k^*}<\infty)=1$, and for sufficiently large $t$, the stock price path hits to the deterministic barrier. Then $\lim_{t\to\infty}E[v(t)] = E[\lim_{t\to\infty} v(t)] = S_0 k > 0$. Note that the boundedness of $v(t)$ and $\lim_{t\to\infty}v(t)=S_0k$ implies $\lim_{t\to\infty}var(v(t))/t=0$. In this trading strategy the holding of the risky asset becomes zero for sufficiently large $t$ and the variance decays to zero in time.\\
\\
\textbf{Probability of loss:}\\
Let $M(t)$ be the maximum of the process $X_t$ in the time interval $[0,t]$ and the probability that the maximum is less than the barrier $k^*=\ln(1+k)/\sigma$ is denoted by $P(M(t)<k^*)$. In our trading strategy the probability of loss $P(v(t)<0)$ is given by 
\begin{align}\label{Eq:ProbLoss}
P(v(t)<0) =& P(S_t < S_0 e^{r_ft},\tau_{k^*}>t) = P(S_t < S_0e^{r_f t}|\tau_{k^*}>t)P(\tau_{k^*}>t) \\\nonumber
=& P((\alpha - r_f - \sigma^2/2)t + \sigma W_t < 0| M(t)<k^*) P(M(t) < k^*) \\\nonumber
\equiv & P(Z<-(\alpha-r_f-\sigma^2/2)\sqrt{t}/\sigma|\tau_{k^*}>t) \\\nonumber
& \times \left(\Phi(\frac{k^*-(\alpha-r_f-\sigma^2/2)t/\sigma}{\sqrt{t}}) - e^{2k^*(\alpha-r_f-\sigma^2/2)/\sigma}\Phi(\frac{-k^*-(\alpha-r_f-\sigma^2/2)t/\sigma}{\sqrt{t}}) \right),\\\nonumber
\end{align}
where $Z$ is the standard normal random variable, and $\Phi(.)$ is the standard normal cdf. For $\alpha-r_f\geq\sigma^2/2$ as $t\to\infty$ the probability of loss goes to zero, whereas for $\alpha-r_f<\sigma^2/2$ we almost surely make a loss as $t\to\infty$. The probability of loss does not decay to zero for $0<\alpha-r_f <\sigma^2/2$.  

In the case of $0<\alpha-r_f < \sigma^2/2$, as $t\to\infty$ from Equation \ref{Eq:ProbLoss} we obtain 
\begin{equation}\label{Eq:ProbLossLimit}
\lim_{t\to\infty}P(v(t)<0) = 1- e^{2k^*(\alpha-r_f-\sigma^2/2)/\sigma} > 0,
\end{equation}
which means that for $0<\alpha-r_f < \sigma^2/2$ the probability of loss does not converge to zero under the hold until barrier strategy. 

Finite first passage time of the Brownian motion with drift in the case of $\alpha-r_f>\sigma^2/2$,  implies that there always exists a sufficiently large $T$ such that $v(t)=S_0 k$ for all $t\geq T$, which implies that $\lim_{t\to\infty} P(v(t)<0)=0$. For $\alpha-r_f>\sigma^2/2$, for sufficiently large $t$ the variance becomes zero. Therefore, we conclude that there exists statistical arbitrage opportunities in the Black-Scholes framework.

Next, we demonstrate the existence of statistical arbitrage via a Monte Carlo experiment.\\
\\
\textbf{Monce Carlo experiment for Example 3:}\\
To verify the validity and convergence of our statistical arbitrage strategy in Example 3, we present the results of a Monte Carlo experiment. We simulate 10,000 sample paths with daily time steps, i.e. $M=252$ for different investment horizons of $T=1,2,5,10,20,50$ years. We set $S_0=1$, $k=0.05$, $\alpha=0.16$, $r_f=0.04$, and $\sigma=0.2$.

\begin{figure}
\begin{center}
\caption{Evolution of the empirical distribution of discounted cumulative trading profits obtained from the trading strategy given in Example 3.}
\includegraphics[
height=2.50in,
width=5.20in
]%
{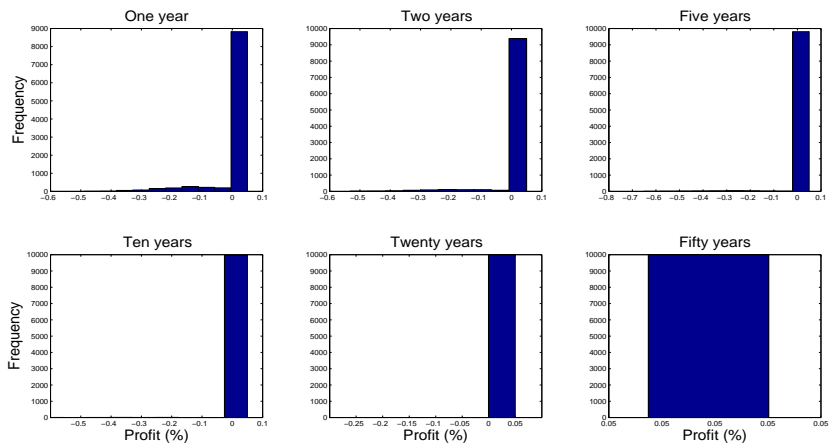}%
\label{Fig:Hist}%
\end{center}
\end{figure}

As can be seen in Figure \ref{Fig:TradingRule}, whenever a simulated stock price path hits the deterministic barrier of $S_0(1+k)e^{r_f t}$, we sell the stock and invest all to the money market account. For sufficiently large $t$, the average of the discounted cumulative trading profits becomes a point mass at $E(v(t))=k$, which can be seen in Figure \ref{Fig:Hist}. 

\begin{figure}
\begin{center}
\caption{Evolution of mean, time averaged variance, and probability of loss for the given trading strategy in Example 3.}
\includegraphics[
height=1.40in,
width=4.0in
]%
{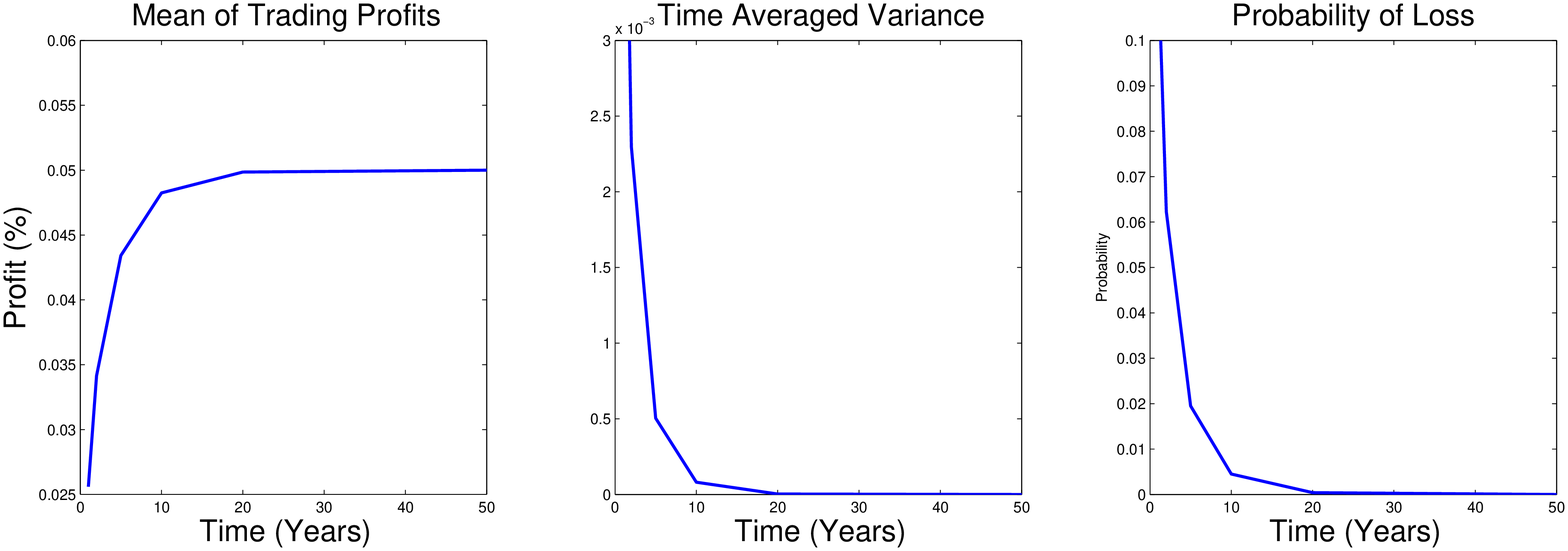}%
\label{Fig:MeanVar2}%
\end{center}
\end{figure}

In Figure \ref{Fig:MeanVar2}, we verify that for sufficiently large $t$ expected discounted profits converges to $k$, whereas the time averaged variance and the probability of loss both decay to zero. Therefore, Monte Carlo results are consistent with our theoretical results showing that there exists statistical arbitrage opportunities in the Black-Scholes framework. 

In the next example, we consider the case when the investor has the knowledge that a given stock will under-perform with low expected growth rate. In this case, there also exists statistical arbitrage opportunities. \\
\\
\textbf{Example: 4}\label{Ex:4} (``Short until barrier'')\\
If $\alpha<r_f$, then we can utilize a similar strategy as in Example 3, but this time we short the stock at time 0 and invest in the money market account at the risk free rate $r_f$. We close the short position whenever the stock price hits the boundary level, $S_0(1+k)^{-1}e^{r_f t}$. \footnote{Since short positions need to be closed in relatively short periods of time, the investor can close his short position at every $\Delta t$ time increment and re-open a new short position immediately, which does not affect our results in the absence of transaction costs.}

In this trading strategy the discounted cumulative trading profits from our strategy can be written as 
\begin{equation}
v(t) = \begin{cases}
S_0 k/(k+1) & \hbox{ if } \tau_k \in [0,t] \\
S_0 - S_te^{-r_f t} & \hbox{ else },
\end{cases}
\end{equation}
where $\tau_{k} = \min\{ t \geq 0: S_t = S_0 (1+k)^{-1}e^{r_f t}\}$, and the barrier level is $B = S_0 (1+k)^{-1}e^{r_f t}$. This is equivalent to the hitting time of the Brownian motion with drift
$$\tau_{-k^*}=\min(t\geq 0: -X_t = -k^*),$$
where $k^*=\ln(1+k)/\sigma$. Let $\mu=\frac{\sigma^2/2 - (\alpha-r_f)}{\sigma}$ to obtain $-X_t = -\mu t -W_t = -\mu t + W_t=(\alpha-r_f-\sigma^2/2)/\sigma + W_t$. Hence previous results can be applied for $\alpha-r_f<\sigma^2/2$. Therefore, our trading strategy ``short until barrier'' satisfies the statistical arbitrage condition, since we have $P(\tau_{-k^*}<\infty)=1$. 

\begin{figure}
\begin{center}
\caption{Evolution of the empirical distribution of discounted cumulative trading profits obtained from the trading strategy given in Example 4.}
\includegraphics[
height=2.50in,
width=5.20in
]%
{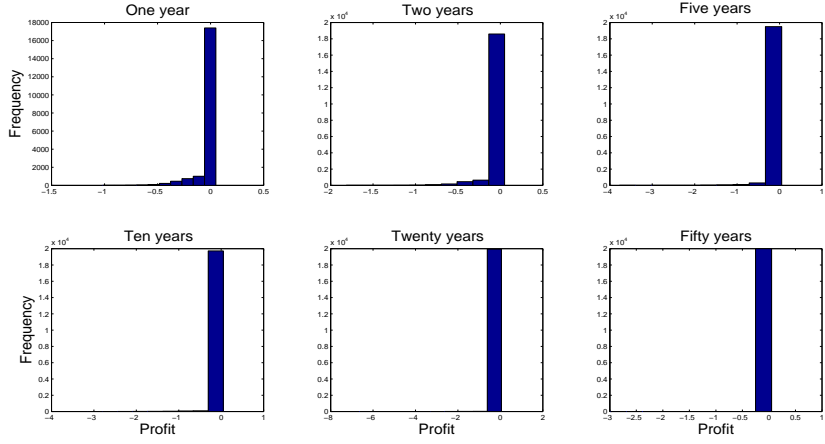}%
\label{Fig:ShortSellHist}%
\end{center}
\end{figure}

Alternatively, without considering a stopping boundary one can simply keep shorting the stock and invest the proceeds in the money market account. Since the discounted stock price decays to zero as given in Proposition \ref{Proposition1}, the variance also decays to zero, while $\lim_{t\to\infty}{E[v(t)]}=S_0$.\\
\\
\textbf{Monte Carlo experiment for Example 4: }\\
We consider the short selling strategy introduced in Example 4 with the following set of parameters: $\alpha=0.01$, $r_f=0.05$, $\sigma=0.2$, $N=10000$, $M=252$, and $k=0.05$. We simulate the stock price paths and whenever the stock price hits the barrier level of $S_0(1+k)^{-1}e^{r_f t}$ we close the short position.

\begin{figure}
\begin{center}
\caption{Evolution of mean, time averaged variance, and probability of loss for the given trading strategy in Example 4.}
\includegraphics[
height=1.50in,
width=4.50in
]%
{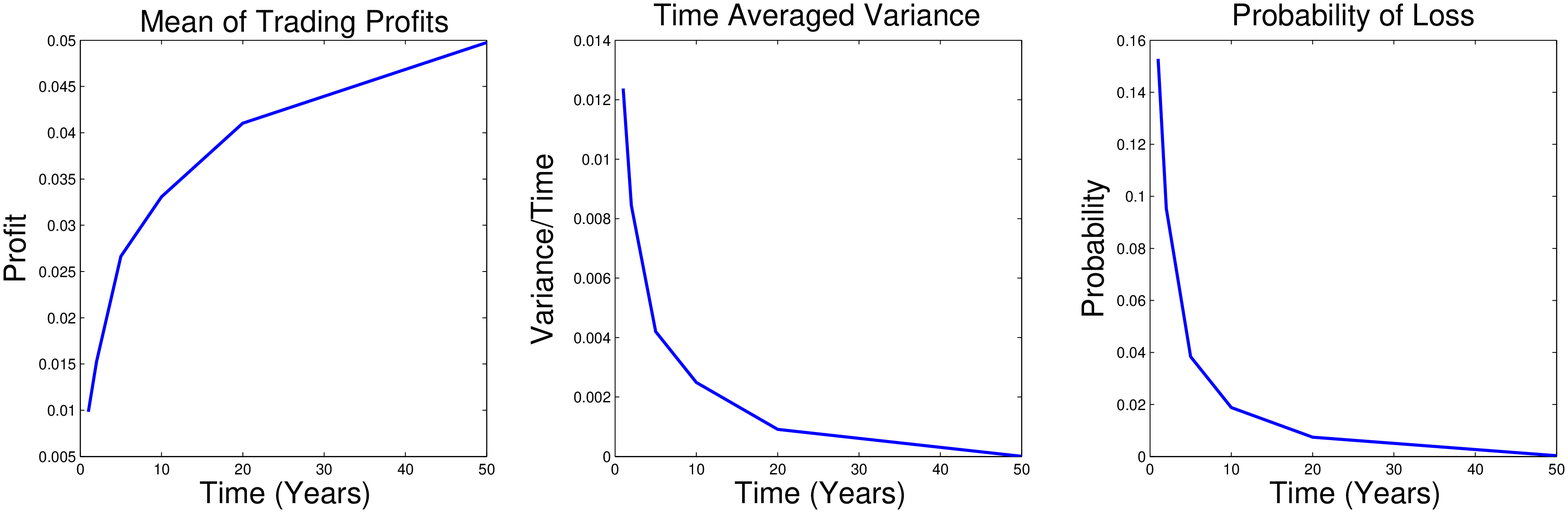}%
\label{Fig:ShortSellEvolution}%
\end{center}
\end{figure}

In Figure \ref{Fig:ShortSellHist} the time evolution of the histogram of the trading profits shows that the distribution of the trading profits converge to a point mass at the limiting trading profit $S_0[1-1/(1+k)]=S_0 k/(1+k)=0.0496$ as $t\to\infty$. 

Figure \ref{Fig:ShortSellEvolution} clearly shows that the expected value of the discounted trading profits is converging to $S_0 k/(1+k)$, while the probability of loss is decaying to zero. The time averaged variance decays to zero as required in the definition of statistical arbitrage.

In the next section we present the conditions that guarantee the existence of statistical arbitrage in the Black Scholes model.

\section{Existence of Statistical Arbitrage}\label{Section:StatArbBSModel}
In this section we present the condition that guarantees the existence of statistical arbitrage opportunities in the Black-Scholes framework. 

In the next theorem we present the case in which the stock price process almost surely hits the barrier in finite time. 
\begin{theorem}\label{Theorem1} Assume that the stock prices follow geometric Brownian motion given by
\begin{equation}
S_t = S_0 \exp\left( (\alpha-\sigma^2/2)t + \sigma W_t  \right), \quad \sigma>0.
\end{equation}
We define a deterministic stopping boundary $B=S_0(1+k)e^{r_f t}$, where $\alpha, r_f, k>0$ are constants and the first passage time is denoted as  
\begin{equation}\label{Eq:StoppingTime}
\tau_B = \min\left(t\geq 0: S_t = B\right).
\end{equation} 
Then the first passage time of the stock price process is finite almost surely, i.e. $P(\tau_B<\infty) = 1$ for $\alpha-r_f>\frac{\sigma^2}{2}$. 
\end{theorem}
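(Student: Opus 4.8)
The plan is to reduce the first-passage problem for the exponentially growing barrier $B=S_0(1+k)e^{r_f t}$ to the first-passage problem of a Brownian motion with constant drift through a \emph{fixed} level, a situation already resolved earlier in the paper. First I would rewrite the hitting event $\{S_t=B\}$ by dividing both sides of
\begin{equation*}
S_0\exp\!\left((\alpha-\sigma^2/2)t+\sigma W_t\right)=S_0(1+k)e^{r_f t}
\end{equation*}
by the deterministic factor $S_0 e^{r_f t}$, taking logarithms, and dividing by $\sigma>0$. This produces the equivalent condition
\begin{equation*}
\frac{\alpha-r_f-\sigma^2/2}{\sigma}\,t+W_t=\frac{\ln(1+k)}{\sigma}.
\end{equation*}

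Next I would set $\mu=(\alpha-r_f-\sigma^2/2)/\sigma$ and $k^{*}=\ln(1+k)/\sigma$, exactly as in Example~3 (Equations~\ref{Eq:BMwithDrift2}--\ref{Eq:BMwithDrift3}), so that $X_t=\mu t+W_t$ is a Brownian motion with drift and the barrier-hitting event becomes $\{X_t=k^{*}\}$. Consequently $\tau_B$ coincides with the first passage time $\tau_{k^{*}}=\min\{t\geq 0:X_t=k^{*}\}$ of $X_t$ through the constant level $k^{*}$. The hypothesis $\alpha-r_f>\sigma^2/2$ forces $\mu>0$, and the assumption $k>0$ forces $k^{*}>0$, so both the drift and the target level are strictly positive, which is precisely the regime in which the earlier analysis applies.

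Finally I would invoke the almost-sure finiteness of the first passage time of a positively drifted Brownian motion, which is already recorded in the excerpt immediately after Equation~\ref{Eq:BMwithDrift}: for $\mu>0$ one has $P(\tau_{k^{*}}<\infty)=1$. For completeness, a self-contained justification follows from the strong law for Brownian motion: $X_t/t=\mu+W_t/t\to\mu>0$ almost surely as $t\to\infty$, so $X_t\to+\infty$ almost surely; since $X_0=0<k^{*}$ and the paths $t\mapsto X_t$ are continuous, the intermediate value theorem guarantees that $X$ attains the level $k^{*}$ at some finite time, i.e. $\tau_{k^{*}}<\infty$ almost surely. I do not anticipate a genuine obstacle here: the only delicate point is confirming that discounting by the deterministic factor $e^{r_f t}$ is exactly what converts the exponentially moving barrier into a constant level for $X_t$, and that the sign of the transformed drift $\mu$ is governed precisely by the stated condition $\alpha-r_f>\sigma^2/2$.
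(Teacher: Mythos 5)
Your reduction of the moving barrier to a fixed level is exactly the paper's: divide by $S_0e^{r_ft}$, take logs, divide by $\sigma$, and identify $\tau_B$ with $\tau_{k^*}$ for the drifted Brownian motion $X_t=\mu t+W_t$ with $\mu=(\alpha-r_f-\sigma^2/2)/\sigma>0$ and $k^*=\ln(1+k)/\sigma>0$. Where you diverge is in how the almost-sure finiteness of $\tau_{k^*}$ is then established. The paper follows Shreve's martingale route: it introduces the exponential martingale $Z(t)=\exp(\theta X(t)-(\theta\mu+\theta^2/2)t)$, applies optional sampling at $t\wedge\tau_{k^*}$, passes to the limit by dominated convergence to obtain $E[\mathbf{1}_{\{\tau_{k^*}<\infty\}}e^{-(\theta\mu+\theta^2/2)\tau_{k^*}}]=e^{-\theta k^*}$, and lets $\theta\downarrow 0$ to read off $P(\tau_{k^*}<\infty)=1$ when $\mu>0$. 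You instead argue from the strong law $W_t/t\to 0$ a.s., so that $X_t\to+\infty$ a.s., and invoke path continuity and the intermediate value theorem. Your argument is correct and more elementary, but it proves strictly less: the martingale computation simultaneously delivers the Laplace transform of $\tau_{k^*}$ (which the paper reuses for the expected-profit and variance formulas) and, by sending $\theta\downarrow-2\mu$ in the case $\mu<0$, the quantitative defect $P(\tau_{k^*}<\infty)=e^{2\mu k^*}<1$, which underlies the paper's claim that the strategy fails for $0<\alpha-r_f<\sigma^2/2$. Note also that simply ``invoking'' the finiteness statement recorded after Equation~\ref{Eq:BMwithDrift} would be circular in spirit, since the paper states it there without proof precisely because Theorem~\ref{Theorem1} is where it is justified; your self-contained law-of-large-numbers argument is what makes the proposal complete.
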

\begin{proof}
Let's define a Brownian motion process with drift as follows: 
\begin{equation}
X_t = \frac{\ln(e^{-r_f t}S_t/S_0)}{\sigma} = \underbrace{\frac{(\alpha-r_f -\sigma^2/2)}{\sigma}}_{\mu}t + W_t.
\end{equation}
Note that  $S_t=B=S_0(1+k)e^{r_f t}$ if and only if $X_t = \ln(1+k)/\sigma$. Let $k^*=\ln(1+k)/\sigma>0$ with stopping time $\tau_{k^*}=\min(t\geq 0: X_t = k^*)$.

Therefore, $P(\tau_B<\infty)=1$ if and only if $P(\tau_{k^*}<\infty)=1.$ Following a procedure that is similar as given in \cite{Shreve:2004} (see page 120) we introduce an exponential martingale process $Z(t)$ given by
\begin{equation}\label{Eq:Martingale1}
Z(t) = \exp(\theta X(t) -(\theta \mu + \frac{\theta^2}{2})t ),
\end{equation}
where $Z(0)=1$ and $\theta$ is an arbitrary non-negative constant. Since any stopping martingale is still a martingale, we have
\begin{equation}\label{Eq:Martingale2}
E\left[ \exp\left( \theta X(t\wedge \tau_{k^*} ) - (\theta \mu +\frac{\theta^2}{2}) (t\wedge \tau_{k^*}) \right) \right]=1.
\end{equation}

In the above equation, if $\tau_{k^*}=\infty$ the term $\exp( -(\theta \mu + \frac{\theta^2}{2})(t\wedge \tau_{k^*}))$ goes to zero, whereas if $\tau_{k^*}<\infty$, we have $\exp( -(\theta\mu +\frac{\theta^2}{2})(t\wedge \tau_{k^*}) )=\exp( -(\theta \mu +\frac{\theta^2}{2})\tau_{k^*})$ for sufficiently large $t$. 

The other term $\exp(\theta X(t\wedge \tau_{k^*}))$ is always bounded by $\exp(\theta k^*)$ if $\tau_{k^*}=\infty$. If $\tau_{k^*}<\infty$, this term equals to $\exp(\theta W(t\wedge\tau_{k^*}))=\exp(\theta k^*)$. 
The product of two exponential terms can be captured by 
\begin{align}
\lim_{t\to\infty}\exp\left(\theta X(t\wedge\tau_{k^*}) -(\theta\mu +\frac{\theta^2}{2})(t\wedge\tau_{k^*}) \right)=\mathbf{1}_{\{\tau_{k^*}<\infty\}} \exp\left( \theta k^* - (\theta\mu +\frac{\theta^2}{2})\tau_{k^*}\right),
\end{align}
where $\mathbf{1}_{\{\tau_{k^*}<\infty\}}=\begin{cases}
1 & \hbox{ if } \tau_{k^*}<\infty \\
0 & \hbox{ if } \tau_{k^*}=\infty
\end{cases}.$

Taking the limit in Equation \ref{Eq:Martingale2} and interchanging the limit and expectation as a result of the dominated convergence theorem, we obtain:
\begin{align}\label{Eq:ThmResult1}
E\left[\mathbf{1}_{\{\tau_{k^*}<\infty\}} \exp(\theta k^* - (\theta\mu+\theta^2/2)\tau_{k^*})\right] &=  1 \\\nonumber
E[\mathbf{1}_{\{\tau_{k^*}<\infty\}} e^{- (\theta \mu+\theta^2/2) \tau_{k^*}}] & = e^{-\theta k^*},
\end{align}
which holds for $(\theta \mu + \theta^2/2)>0$ and $\theta>0$. 

For the case $\mu>0$ (i.e. $\alpha-r_f>\sigma^2/2$), we can take the limit on both sides in Equation \ref{Eq:ThmResult1} for $\theta\downarrow 0$ which yields $P(\tau_{k^*}<\infty)=1$. However, for the case $\mu<0$ and $\mu>-\theta/2$, $\theta$ can only converge to the positive constant, $\theta\downarrow -2\mu$, for which we obtain
\begin{equation}
E[\mathbf{1}_{\{\tau_{k^*}<\infty\}}] = e^{ 2\mu k^*}<1, \quad \hbox{ for } \mu<0
\end{equation}
and therefore $P(\tau_{k^*}<\infty)<1$. 
\end{proof}\\
\\
\textbf{Monte Carlo experiment for the case: $0<\alpha-r_f<\sigma^2/2$}\\
We demonstrate that we do not obtain statistical arbitrage for $0<\alpha-r_f<\sigma^2/2$ via buy and hold (long) until barrier strategies. Consider the parameters given by $\alpha=0.05$, $r_f=0.04$, $k=0.05$, $\sigma=0.2$, $N=10000$, $M=252$.

\begin{figure}
\begin{center}
\caption{Evolution of mean, time averaged variance, and probability of loss for the buy and hold (long) until barrier trading strategy for the case of $0\leq \alpha-r_f\leq \frac{\sigma^2}{2}$. Parameters given as $\alpha=0.05$, $r_f=0.04$, $k=0.05$, $\sigma=0.2$, $N=10000$, $M=252$.}
\includegraphics[
height=1.50in,
width=4.50in
]%
{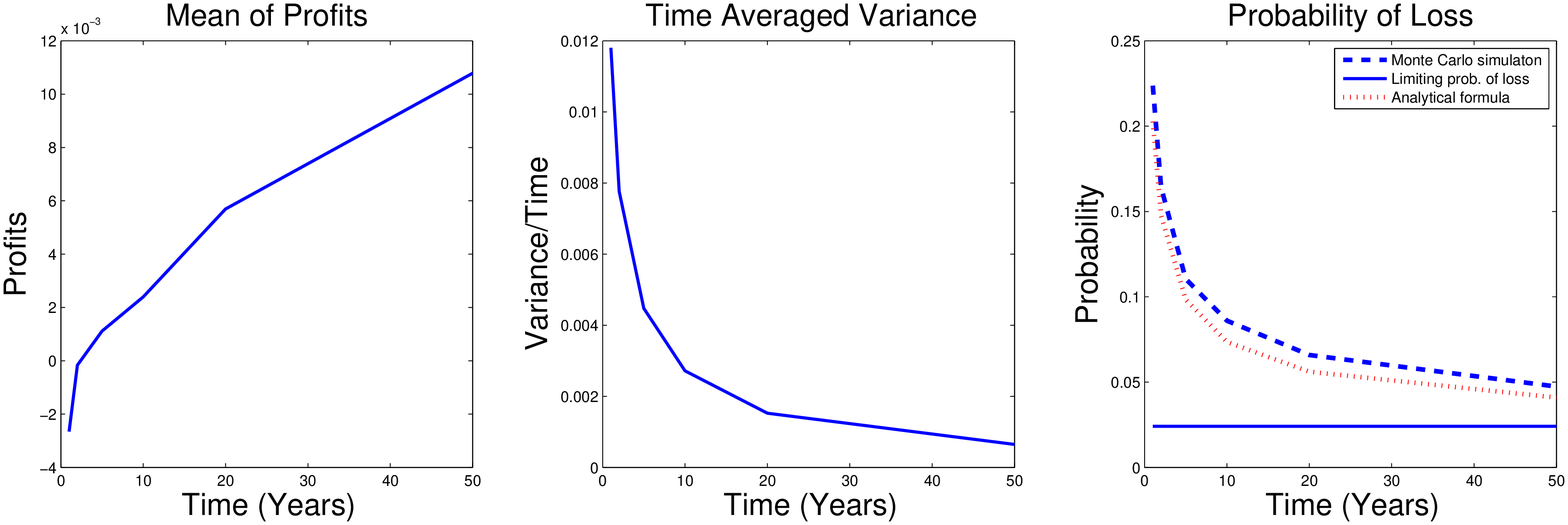}%
\label{Fig:NoStatArb}%
\end{center}
\end{figure}

In Figure \ref{Fig:NoStatArb} we plot the time evolution of mean, time averaged variance and the probability of loss for the trading strategy considered. We observe that the probability of loss obtained from the analytical formula given in Equation \ref{Eq:ProbLoss} and the Monte Carlo estimator are quite close to each other. We also plot the limiting probability of loss when $T\to\infty$, as given by the analytical formula in Equation \ref{Eq:ProbLossLimit}. In this example statistical arbitrage is not obtained simply because for the long until barrier strategy the probability of loss decays to zero only if we have $\alpha-r_f>\sigma^2/2$ as given in Equation \ref{Eq:ProbLoss}. Therefore, it is clear that Condition 3 in Definition \ref{Def:StatisticalArbitrage} is not satisfied. 

The next corollary states the symmetric result for $\alpha-r_f<\sigma^2/2$.

\begin{corollary}\label{Corollary1}
The result obtained in Theorem \ref{Theorem1} applies for the symmetric case when the stopping boundary is defined as $B = S_0 (1+k)^{-1}e^{r_f t}$, (with an abuse of notation we still denote the barrier with $B$)
\begin{equation}
\tau_{B} = \min\left(t\geq 0: S_t = B\right).
\end{equation} 
Then the first passage time of the stock price process to level $B$ is finite almost surely, i.e. $P(\tau_B<\infty) = 1$, for $\alpha-r_f < \sigma^2/2$ and $P(\tau_B<\infty)<1$ for $\alpha-r_f\geq \sigma^2/2$.  
\end{corollary}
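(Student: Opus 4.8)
The plan is to reduce the corollary directly to Theorem~\ref{Theorem1} by exploiting the sign symmetry of Brownian motion. First I would reuse verbatim the change of variables from the theorem's proof: set
\[
X_t = \frac{\ln(e^{-r_f t}S_t/S_0)}{\sigma} = \mu t + W_t, \qquad \mu = \frac{\alpha - r_f - \sigma^2/2}{\sigma}.
\]
The only difference is the target level: now $S_t = B = S_0(1+k)^{-1}e^{r_f t}$ holds if and only if $X_t = \ln((1+k)^{-1})/\sigma = -k^*$, where $k^* = \ln(1+k)/\sigma > 0$. So whereas Theorem~\ref{Theorem1} concerns hitting the positive level $+k^*$, the corollary concerns hitting the negative level $-k^*$, and $P(\tau_B<\infty)=1$ if and only if $P(\tau_{-k^*}<\infty)=1$.

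Next I would pass to the reflected process $Y_t := -X_t$. Because $\widetilde W_t := -W_t$ is again a standard Brownian motion, $Y_t = -\mu t + \widetilde W_t$ is a Brownian motion with drift $-\mu$, and the events coincide: $\{X_t = -k^*\} = \{Y_t = k^*\}$ with $k^* > 0$. This is exactly the configuration treated in the proof of Theorem~\ref{Theorem1}, but with the drift parameter $-\mu$ in place of $\mu$. Reading off that proof's conclusion for the process $Y$ gives $P(\tau^Y_{k^*}<\infty)=1$ precisely when $-\mu > 0$, i.e. $\mu < 0$, i.e. $\alpha - r_f < \sigma^2/2$; and $P(\tau^Y_{k^*}<\infty)<1$ when $-\mu < 0$, i.e. $\alpha - r_f > \sigma^2/2$. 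Translating back to $X$ yields the claimed statement.

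If one prefers a self-contained derivation in place of the reflection, I would rerun the exponential-martingale argument with $Z(t) = \exp(\theta X(t) - (\theta\mu + \theta^2/2)t)$ but now taking $\theta < 0$, so that the bound $\exp(\theta X(t\wedge\tau_{-k^*})) \le \exp(-\theta k^*)$ holds on the event that the lower barrier has not yet been reached (there $X_s > -k^*$ and $\theta<0$). The optional-stopping identity together with dominated convergence then gives $E[\mathbf{1}_{\{\tau_{-k^*}<\infty\}}\, e^{-(\theta\mu+\theta^2/2)\tau_{-k^*}}] = e^{\theta k^*}$, valid whenever $\theta\mu + \theta^2/2 > 0$. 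For $\mu < 0$ any $\theta<0$ is admissible and I would let $\theta \uparrow 0$ to obtain $P(\tau_{-k^*}<\infty)=1$, whereas for $\mu > 0$ admissibility forces $\theta < -2\mu$, and the limit $\theta \uparrow -2\mu$ gives $E[\mathbf{1}_{\{\tau_{-k^*}<\infty\}}] = e^{-2\mu k^*} < 1$.

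The computational content is entirely routine once the reflection is in place; the one point demanding care is the boundary case $\mu = 0$, i.e. $\alpha - r_f = \sigma^2/2$. There the driftless Brownian motion is (null) recurrent and hence hits $-k^*$ almost surely, so strictly speaking the almost-sure-hitting region is $\alpha - r_f \le \sigma^2/2$. I would therefore flag that the corollary's placement of the equality case in the ``$<1$'' regime is inherited from the theorem's convention of treating only the strict-drift cases rather than a genuine failure of almost-sure hitting at $\mu=0$; the martingale limit degenerates exactly there, which is the real obstacle to reading the boundary behaviour off the formula.
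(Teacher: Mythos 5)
Your proposal is correct and follows essentially the same route as the paper: the paper likewise reflects the process (writing $X_t=-\mu t-W_t$ with the drift sign flipped) and reruns the exponential-martingale/optional-stopping argument, which is exactly your $\theta<0$ variant with the sign absorbed into the martingale exponent. Your caveat about the boundary is well taken and is a defect of the paper rather than of your argument: the paper's case analysis only covers $\mu<0$ strictly, yet the corollary asserts $P(\tau_B<\infty)<1$ for $\alpha-r_f\geq\sigma^2/2$, whereas at $\alpha-r_f=\sigma^2/2$ the driftless Brownian motion is recurrent and the barrier is hit almost surely.
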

\begin{proof}
Consider $X_t = \ln(S_t e^{-r_f t}/S_0)$ and $X_t = -\mu t + W_t$, where $\mu=\frac{r_f-\alpha+\sigma^2/2}{\sigma}.$ Then equivalently we can write $X_t = -\mu t -W_t$. Let $k^*=-\ln(1+k)/\sigma<0$ with stopping time $\tau_{k^*}=\min(t\geq 0: X_t = k^*)$.

Following the similar arguments as in \cite{Shreve:2004} page 110, we consider the following exponential martingale
\begin{equation}
Z(t) = \exp\left( -\theta X_t - (\mu\theta + \theta^2/2)t\right), 
\end{equation}
where $\theta>0$ is an arbitrary constant. 

The term $\exp(-\theta X_t)$ is always bounded by $e^{\theta k^*}$. We obtain
\begin{equation}
E[ e^{-(\mu \theta + \theta^2/2)\tau_{k^*}} \mathbf{1}_{\{\tau_{k^*}<\infty\}}]=e^{-k^* \theta}.
\end{equation}
There are two cases to consider: (i) $\mu>0$; (ii) $\mu<0$ and $\theta>-2\mu$. In the first case, we can let $\theta\to 0$, then obtain $P(\tau_{k^*}<\infty)=1$ for $\alpha-r_f<\sigma^2/2$. In the second case, $\mu<0$ and $\theta>-2\mu$, we have $\alpha-r_f\geq \sigma^2/2$ and $\theta$ converges to a positive constant, and thus we have $P(\tau_{k^*}<\infty)<1$.
\end{proof}

\begin{theorem}\label{Theorem:NoStatArbCondition}
In the Black-Scholes model, there exists statistical arbitrage in the sense of Definition \ref{Def:StatisticalArbitrage} if $\alpha-r_f \neq \sigma^2/2$. If $\alpha-r_f>\sigma^2/2$, then there exists statistical arbitrage for the long until barrier strategies, whereas if $\alpha-r_f<\sigma^2/2$, there exists statistical arbitrage for the short until barrier strategies.  
\end{theorem}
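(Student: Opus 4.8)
The plan is to verify the four requirements of Definition \ref{Def:StatisticalArbitrage} directly for the two barrier strategies already constructed, using Theorem \ref{Theorem1} and Corollary \ref{Corollary1} to supply the almost-sure finiteness of the relevant first passage time in each regime. For $\alpha-r_f>\sigma^2/2$ I would take the long-until-barrier strategy of Example \ref{Ex:3}, whose discounted value equals $S_0 k$ on $\{\tau_{k^*}\le t\}$ and $S_t e^{-r_f t}-S_0$ otherwise; for $\alpha-r_f<\sigma^2/2$ I would take the short-until-barrier strategy of Example \ref{Ex:4}, with limiting payoff $S_0 k/(1+k)$. Condition 1 is immediate since each strategy is of zero initial cost by construction. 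Condition 2 follows from the expected-value computations already carried out: Theorem \ref{Theorem1} (respectively Corollary \ref{Corollary1}) gives $P(\tau_{k^*}<\infty)=1$, so $v(t)$ converges almost surely to a strictly positive constant, and since the payoff is bounded above, dominated convergence yields $\lim_{t\to\infty}E[v(t)]>0$.

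The step that carries the structural content is Condition 3. Here I would observe that a loss can occur only before the barrier is reached: on $\{\tau_{k^*}\le t\}$ the realised discounted profit is the positive constant, so $\{v(t)<0\}\subseteq\{\tau_{k^*}>t\}$ for both strategies. Almost-sure finiteness of $\tau_{k^*}$ then forces $P(\tau_{k^*}>t)\downarrow 0$, whence $\lim_{t\to\infty}P(v(t)<0)=0$. This is exactly where the hypothesis $\alpha-r_f\neq\sigma^2/2$ is essential, because it is the sign of $\mu=(\alpha-r_f-\sigma^2/2)/\sigma$ that decides, through Theorem \ref{Theorem1} and Corollary \ref{Corollary1}, which barrier is hit with probability one, and hence which of the two strategies must be selected.

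For Condition 4 I would first note that it is genuinely in force, since with positive probability the path fails to reach the barrier by any finite time $t$ and ends on the loss side, so $P(v(t)<0)>0$ for all $t<\infty$. For the long strategy the conclusion is then painless: before $\tau_{k^*}$ the discounted stock price is capped by the barrier, so $v(t)$ takes values in the bounded interval $[-S_0,\,S_0 k]$, its variance is bounded uniformly in $t$, and therefore $var(v(t))/t\to 0$.

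I expect the main obstacle to be Condition 4 for the short strategy, because there $v(t)=S_0-S_t e^{-r_f t}$ on $\{\tau_{k^*}>t\}$ is bounded above but not below: a short position that never touches the lower barrier can accumulate arbitrarily large losses. The plan is to bound $E[v(t)^2]$ by splitting on $\{\tau_{k^*}\le t\}$, which contributes a bounded constant, and on $\{\tau_{k^*}>t\}$, where one must estimate the second moment of the discounted price. An exponential change of measure (Girsanov) that absorbs the weight reduces this term to
\begin{equation}
E\left[\left(S_t e^{-r_f t}\right)^2 \mathbf{1}_{\{\tau_{k^*}>t\}}\right] = S_0^2\, e^{(2(\alpha-r_f)+\sigma^2)t}\, \tilde{P}(\tau_{k^*}>t),
\end{equation}
where under the tilted measure $\tilde{P}$ the discounted log-price acquires the upward-shifted drift $\alpha-r_f+3\sigma^2/2$. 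The crux is then to show that the decay of the survival probability $\tilde{P}(\tau_{k^*}>t)$ outweighs the exponential prefactor, so that $var(v(t))/t\to 0$ rather than diverging. This is the delicate point of the whole argument, and it is precisely where any restriction on $\alpha-r_f$ stronger than the almost-sure finiteness already used for Conditions 1--3 would have to enter; I would treat it as the step demanding the most care.
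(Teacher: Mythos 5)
Your overall architecture is the same as the paper's: both arguments reduce the theorem to the two barrier strategies of Examples 3 and 4 and check the four conditions of Definition \ref{Def:StatisticalArbitrage}, using Theorem \ref{Theorem1} and Corollary \ref{Corollary1} to supply almost-sure finiteness of the first passage time in each regime. Your handling of Conditions 1--3 and of Condition 4 for the long-until-barrier case (where $v(t)$ is confined to $[-S_0, S_0 k]$, so the variance is uniformly bounded) matches the paper and is correct.

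The genuine gap is exactly where you placed it, but it is worse than ``delicate'': the step you hope to complete actually fails on part of the claimed parameter range. Your Girsanov computation is right, and under the tilted measure $\tilde{P}$ the log discounted price has drift $(\alpha-r_f+3\sigma^2/2)/\sigma$, which is \emph{positive} throughout $-3\sigma^2/2<\alpha-r_f<\sigma^2/2$ and therefore points \emph{away} from the lower barrier $-k^*$. Hence $\tilde{P}(\tau_{k^*}>t)$ does not decay at all: it converges to the positive constant $1-e^{-2(\nu+2\sigma)k^*}$, where $\nu=(\alpha-r_f-\sigma^2/2)/\sigma$. The second moment on the survival event therefore grows like $e^{(2(\alpha-r_f)+\sigma^2)t}$, so $var(v(t))/t\to\infty$ whenever $\alpha-r_f>-\sigma^2/2$, and Condition 4 (which is in force, since $P(v(t)<0)>0$ for every finite $t$) fails for the short-until-barrier strategy on the band $-\sigma^2/2<\alpha-r_f<\sigma^2/2$; it holds only for $\alpha-r_f\leq-\sigma^2/2$, consistent with Proposition \ref{Proposition1}. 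Be aware that the paper's own proof does not resolve this either: it simply asserts $\lim_{t\to\infty}var(v(t))/t=0$ in the short case without computation, and its Monte Carlo illustration uses $\alpha-r_f=-0.04<-\sigma^2/2=-0.02$, which sits in the safe sub-band. To close your argument you would have to either restrict the short case to $\alpha-r_f\leq-\sigma^2/2$, or modify the strategy so that $v(t)$ is bounded on $\{\tau_{k^*}>t\}$ (for instance by also exiting at an upper stop-loss barrier, or by rolling the short position as in the paper's footnote); as written, the proposal does not establish the theorem for the full range $\alpha-r_f<\sigma^2/2$.
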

\begin{proof}
First consider the case $\alpha-r_f>\frac{\sigma^2}{2}$. We construct our ``long until barrier'' type of trading strategy as follows. Long the stock at time 0 by borrowing from the bank at the interest rate $r_f$, and hold the stock until it hits the barrier and sell it at the level $S_0(1+k)e^{r_f t}$. Then, by Theorem \ref{Theorem1}, we have $P(\tau_{k^*}<\infty)=1$. As we have discussed in Example 3, for sufficiently large $t$, we have the discounted trading profits given as 
\begin{equation}
v(t) = \begin{cases}
S_0 k & \hbox{ if } \tau_{k^*} \in [0,t],\\
S_t e^{-r_f t} - S_0 & \hbox{ else }.  
\end{cases}
\end{equation} 
Then, we have $\lim_{t\to\infty}E[v(t)]=S_0 k>0$, $\lim_{t\to\infty}var(v(t))/t=0$, and since for sufficiently large $t$ stock price process almost surely hits the barrier level, the probability of loss decays to zero, i.e. $\lim_{t\to\infty}{P(v(t)<0)}=0$. Therefore, Definition \ref{Def:StatisticalArbitrage} is satisfied. 

For second the case, $\alpha-r_f<\sigma^2/2$, we can consider ``short until barrier'' type of trading strategy. At time 0 short one unit of stock $S_0$ and invest proceedings in the money market account. As we analysed in Example 4, the cumulative discounted trading profits are given as 
\begin{equation}
v(t) = \begin{cases}
S_0 k/(k+1) & \hbox{ if } \tau_{k^*}\in [0,t],\\
S_0 - S_t e^{-r_f t} & \hbox{ else }.
\end{cases}
\end{equation}
Similarly, we have  $\lim_{t\to\infty}E[v(t)]=S_0 k/(k+1)>0$, $\lim_{t\to\infty}var(v(t))/t=0$, and by Corollary \ref{Corollary1} stock price paths almost surely hit the barrier in finite time, i.e. $P(\tau_{k^*}<\infty)=1$. The probability of loss decays to zero while the mean of trading profits converge to $S_0 k/(1+k)$. Therefore, for $\alpha-r_f>\sigma^2/2$ and $\alpha-r_f<\sigma^2/2$, we are able to obtain statistical arbitrage via long and short until barrier strategies, respectively.
\end{proof}

\section{On the Definition of Statistical Arbitrage}\label{Section:OtherProperties}
In this section, we prove additional properties of the statistical arbitrage strategies characterized by Definition \ref{Def:StatisticalArbitrage}. In the next proposition we prove that if the variance itself decays to zero in time, then this implies that the probability of loss decays to zero. Hence, Condition 3 of Definition \ref{Def:StatisticalArbitrage} becomes redundant. We also prove that if the expected trading profits goes to infinity in time and the variance converges to a constant, then this implies Condition 3 is satisfied. 

\begin{proposition}\label{PropositionStatArbDef1} Given the probability space $(\Omega,\mathcal{A},P)$ and stochastic process $\{v(t):t\geq 0\}$ defined on this space. Consider that we have the following properties for the trading strategy $\{v(t):t\geq 0\}$
\begin{enumerate}
\item $v(0) = 0$
\item $\lim_{t\to \infty} E[v(t)] > 0$,
\item $\lim_{t\to \infty}{var(v(t))}=0$, 
\end{enumerate}
then conditions $1-3$ implies $\lim_{t\to \infty}{{P}(v(t)<0)}=0$. 
\end{proposition}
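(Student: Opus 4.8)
The plan is to apply Chebyshev's inequality directly. Write $\mu_t := E[v(t)]$ and $\sigma_t^2 := \mathrm{var}(v(t))$. By hypothesis~2 there is some $\mu_\infty > 0$ with $\mu_t \to \mu_\infty$, so there exists $T$ such that $\mu_t > \mu_\infty/2 > 0$ for all $t \geq T$. The point of establishing this is that $\mu_t$ is eventually strictly positive, which is exactly what lets me convert the one-sided event $\{v(t)<0\}$ into a two-sided deviation event centered at the mean.

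First I would fix $t \geq T$ and record the inclusion of events $\{v(t) < 0\} \subseteq \{|v(t) - \mu_t| > \mu_t\}$: indeed, if $v(t) < 0$ while $\mu_t > 0$, then $v(t) - \mu_t < -\mu_t < 0$, so $|v(t) - \mu_t| > \mu_t$. Passing to probabilities and invoking Chebyshev's inequality then yields
\begin{equation}
P(v(t) < 0) \;\leq\; P\!\left(|v(t) - \mu_t| \geq \mu_t\right) \;\leq\; \frac{\sigma_t^2}{\mu_t^2}.
\end{equation}
Taking $t \to \infty$, hypothesis~3 gives $\sigma_t^2 \to 0$, while $\mu_t \to \mu_\infty > 0$ keeps the denominator bounded away from zero, so the right-hand side tends to $0/\mu_\infty^2 = 0$. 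Since probabilities are nonnegative, a squeeze gives $\lim_{t\to\infty} P(v(t) < 0) = 0$, which is the claim; note that hypothesis~1 is not actually needed for the limiting statement.

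I do not expect a genuine obstacle here, as the argument is a single Chebyshev bound. The one point deserving a line of care is that $\mu_t$ need only be \emph{eventually} positive rather than positive for every $t$, since the rewriting of $\{v(t)<0\}$ as a centered deviation event uses $\mu_t>0$; because the conclusion is a statement about the limit $t\to\infty$, restricting to $t \geq T$ is harmless, and I would flag this explicitly to forestall any concern that hypothesis~2 must be strengthened to hold for all finite $t$.
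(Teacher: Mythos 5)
Your proof is correct and follows essentially the same route as the paper: both arguments use the eventual positivity of $\mu_t = E[v(t)]$ to recast $\{v(t)<0\}$ as a deviation of size $\mu_t$ from the mean, bound it by a concentration inequality, and let hypothesis~3 kill the bound. The only difference is that the paper invokes Cantelli's one-sided inequality, giving $P(v(t)<0)\leq \sigma_t^2/(\sigma_t^2+\mu_t^2)$, whereas you use the two-sided Chebyshev bound $\sigma_t^2/\mu_t^2$; both tend to zero under the stated hypotheses, so the difference is immaterial here.
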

\begin{proof}
Cantelli's inequality \cite{Cantelli:1910}, which is a single tail version of Chebyshev's inequality, states that for a real random variable $X$ with mean $\mu$ and variance $\sigma^2$
\begin{equation}
P(X  - \mu\geq a) \leq \frac{\sigma^2}{\sigma^2+a^2}, 
\end{equation}
where $a\geq 0$. We can change the sign of $X$ and consider $-X$ with mean $-\mu$ which yields
\begin{equation}
P(-X  + \mu\geq a)  = P(X\leq \mu-a)\leq \frac{\sigma^2}{\sigma^2+a^2}, 
\end{equation}

For each fix value of time $t$ we have a random variable $v(t)$ with mean $\mu_t$ and variance $\sigma^2_t$. Consider $P(v(t)\leq \mu_t - a)\leq \frac{\sigma^2_t}{\sigma^2_t+a^2}$
by setting $a= \mu_t$. By above Condition 2 we can always find a sufficiently large time $t_0$ such that $\forall t\geq t_0$, $\mu_t=E[v(t)]>0$, which implies $P(v(t)<0)\leq \frac{\sigma_t^2}{\sigma^2_t+\mu_t^2}$. Taking the limit on both sides we have $\lim_{t\to \infty}{P(v(t) < 0)} = 0$ as required.
\end{proof}
 
We prove a second property for the statistical arbitrage strategy $v(t)$ in the next proposition. 
\begin{proposition} Consider that we have the following properties for the trading strategy $\{v(t):t\geq 0\}$
\begin{enumerate}
\item $v(0) = 0$
\item $\lim_{t\to \infty} E[v(t)] = \infty$,
\item $\lim_{t\to \infty}{var(v(t))}= c$, 
\end{enumerate}
where $c$ is a positive constant. Then, conditions $1-3$ imply $\lim_{t\to \infty}{{P}(v(t)<0)}=0$. 
\end{proposition}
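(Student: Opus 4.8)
The plan is to mirror the argument used in Proposition~\ref{PropositionStatArbDef1}, since the hypotheses are nearly identical and only the reason the Cantelli bound vanishes changes. First I would invoke Cantelli's inequality for the random variable $v(t)$ at each fixed $t$, writing $\mu_t = E[v(t)]$ and $\sigma_t^2 = var(v(t))$. This gives, for $a \geq 0$,
\[
P(v(t) \leq \mu_t - a) \leq \frac{\sigma_t^2}{\sigma_t^2 + a^2}.
\]

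Next, since Condition 2 now asserts $\mu_t \to \infty$, there is a time $t_0$ beyond which $\mu_t > 0$; for all such $t$ I would set $a = \mu_t$ to obtain
\[
P(v(t) < 0) \leq \frac{\sigma_t^2}{\sigma_t^2 + \mu_t^2}.
\]

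Finally I would take the limit $t \to \infty$. The difference from the earlier proposition lies entirely in this step: there the numerator vanished because $\sigma_t^2 \to 0$, whereas here the numerator stays bounded (indeed $\sigma_t^2 \to c$) while the denominator blows up because $\mu_t^2 \to \infty$. Hence the ratio tends to $0$ and $\lim_{t\to\infty} P(v(t) < 0) = 0$, as required. I do not anticipate any genuine obstacle here: the only point that warrants a moment of care is confirming that $\sigma_t^2 + \mu_t^2 \to \infty$, which follows at once from $\sigma_t^2 \to c$ finite together with $\mu_t^2 \to \infty$. In other words, it is precisely the boundedness of the variance (its convergence to a finite $c$) that the argument exploits, so this hypothesis is an asset rather than an impediment.
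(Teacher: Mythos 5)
Your proof is correct and is exactly what the paper intends: the paper's own proof is simply the remark that the argument is ``similar to the proof of Proposition~\ref{PropositionStatArbDef1}'', and you have carried out that Cantelli-inequality argument, correctly identifying that here the bound $\sigma_t^2/(\sigma_t^2+\mu_t^2)$ vanishes because the denominator diverges rather than the numerator vanishing. No issues.
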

\begin{proof}
Proof is similar to the proof of Proposition \ref{PropositionStatArbDef1}. 
\end{proof}

At any initial time $t_0$ ($t\geq t_0$), let the collection of stochastic processes $v(t)$ satisfying Definition \ref{Def:StatisticalArbitrage} be denoted by $\mathcal{C}$. In the next proposition, we prove that $\mathcal{C}$ is a convex set.
\begin{proposition}
Given any two trading strategies $v_1(t)$, $v_2(t)\in \mathcal{C}$, their linear combination, $v^*(t)=av_1(t)+(1-a)v_2$, is also in $\mathcal{C}$. 
\end{proposition}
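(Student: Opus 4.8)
The plan is to verify each of the four conditions of Definition \ref{Def:StatisticalArbitrage} directly for $v^*(t) = a v_1(t) + (1-a)v_2(t)$, working with $a \in [0,1]$ so that both weights $a$ and $1-a$ are nonnegative. Condition 1 is immediate from linearity, since $v^*(0) = a v_1(0) + (1-a)v_2(0) = 0$. Condition 2 follows from linearity of expectation: $\lim_{t\to\infty} E[v^*(t)] = a\lim_{t\to\infty}E[v_1(t)] + (1-a)\lim_{t\to\infty}E[v_2(t)]$, which is a nonnegative combination of two strictly positive limits and hence strictly positive.

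For Condition 3, I would use the set inclusion $\{v^*(t) < 0\} \subseteq \{v_1(t) < 0\} \cup \{v_2(t) < 0\}$, which holds because $a, 1-a \geq 0$ forces $v^*(t) \geq 0$ whenever both $v_1(t) \geq 0$ and $v_2(t) \geq 0$. The union bound then gives $P(v^*(t) < 0) \leq P(v_1(t) < 0) + P(v_2(t) < 0)$, and letting $t \to \infty$ sends both terms to zero by Condition 3 for $v_1$ and $v_2$.

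For Condition 4, I would control the standard deviation using its subadditivity (the triangle inequality in $L^2$), namely $\sqrt{var(v^*(t))} \leq a\sqrt{var(v_1(t))} + (1-a)\sqrt{var(v_2(t))}$, equivalently obtained by bounding the covariance cross term via Cauchy--Schwarz. Squaring and applying $(x+y)^2 \leq 2x^2 + 2y^2$ yields $var(v^*(t))/t \leq 2a^2\, var(v_1(t))/t + 2(1-a)^2\, var(v_2(t))/t$, and each term on the right vanishes in the limit, giving $\lim_{t\to\infty} var(v^*(t))/t = 0$.

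The main obstacle is that Condition 4 is conditional: it is only required to hold when $P(v(t)<0) > 0$ for all finite $t$, so before invoking $var(v_i(t))/t \to 0$ I must confirm it is actually available for each factor $v_i$. The clean way to dispatch this is to observe that every $v_i \in \mathcal{C}$ satisfies $var(v_i(t))/t \to 0$ regardless of the branch: if $P(v_i(t)<0) > 0$ for all finite $t$ this is precisely Condition 4 for $v_i$, and otherwise $v_i$ reduces to a standard arbitrage, so by the money-market convention discussed immediately after Definition \ref{Def:StatisticalArbitrage} its discounted value is eventually constant, whence $var(v_i(t))$ is bounded and $var(v_i(t))/t \to 0$ trivially. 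With this uniform bound secured for both factors, the variance estimate above closes the argument and establishes $v^* \in \mathcal{C}$.
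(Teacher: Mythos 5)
Your proof is correct and follows the same overall skeleton as the paper's (verify the four conditions of Definition \ref{Def:StatisticalArbitrage} one by one), but your execution of Conditions 3 and 4 is genuinely different and noticeably tighter. For Condition 3 the paper writes $a\lim_{t\to\infty}P(v_1(t)<0)=\lim_{t\to\infty}P(av_1(t)<0)=0$ and then asserts the conclusion; that identity is not meaningful as written and the implication to $P(v^*(t)<0)\to 0$ is left unjustified, whereas your inclusion $\{v^*(t)<0\}\subseteq\{v_1(t)<0\}\cup\{v_2(t)<0\}$ (valid precisely because $a,1-a\geq 0$) plus the union bound is the correct argument and is the natural repair of that step. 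For Condition 4 the paper expands the variance and bounds the covariance via $\rho\leq 1$, which is essentially your Cauchy--Schwarz/subadditivity bound, but it silently assumes that $var(v_i(t))/t\to 0$ is available for both strategies; as you observe, Condition 4 is conditional, so a $v_i\in\mathcal{C}$ whose loss probability vanishes at some finite time is not directly constrained, and you close this gap by appealing to the money-market convention stated after Definition \ref{Def:StatisticalArbitrage} under which such a strategy's discounted value is eventually constant and its variance bounded. That caveat is worth keeping: without some such convention the claim is not a formal consequence of Definition \ref{Def:StatisticalArbitrage} alone, since the definition itself places no growth restriction on $var(v_i(t))$ once $P(v_i(t)<0)$ hits zero in finite time. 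In short, your argument proves the same statement by the same high-level route but supplies the two justifications the paper omits.
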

\begin{proof}
Let $v_1(t)$ and $v_2(t)$ be any two stochastic processes that satisfy Definition $\ref{Def:StatisticalArbitrage}$. Let $v^*=a v_1(t) + (1-a)v_2(t)$ where $a\in [0,1]$.
(i) Since both $v_1(0)=0$ and $v_2(0)=0$, then $v^*(0)=0$.

(ii) We have $a\lim_{t\to \infty}  E[v_1(t)] = \lim_{t\to \infty}E[a v_1(t)] > 0$ and $$(1-a)\lim_{t\to \infty} E[v_2(t)] = \lim_{t\to \infty} E[(1-a)v_2(t)] > 0,$$ which implies $\lim_{t\to \infty} E[av_1(t)+(1-a)v_2(t)] = \lim_{t\to \infty} E[v^*(t)] > 0$. 

(iii) We have $a \lim_{t\to \infty}{{P}(v_1(t)<0)}=\lim_{t\to \infty}{{P}(a v_1(t)<0)}=0$ and similarly $\lim_{t\to \infty}{{P}((1-a)v_2(t)<0)}=0$, which implies $\lim_{t\to \infty}{{P}(v^*(t)<0)}=0$.

(iv) $var(av_1(t)+(1-a)v_2(t))=a^2var(v_1(t)) + (1-a)^2var(v_2(t)) + 2a(1-a)cov(v_1(t),v_2(t))$ since $cov(v_1,v_2)=\rho \sigma_{1}\sigma_{2}$ with $\rho\in[0,1]$, we obtain $$\lim_{t\to \infty}{\frac{var(v^*(t))}{t}}=0$$ if ${P}(v^*(t)<0)>0,\quad$ $\forall t<\infty$.  
\end{proof}

As a result of the convexity of the set of statistical arbitrage trading strategies, we can consider the linear combination of two trading strategies and obtain the optimal investment weights that minimizes the variance. More generally, the mean-variance analysis of portfolio theory can be applied to obtain the efficient set of statistical arbitrage strategies that invests into a set of stocks that satisfy the statistical arbitrage condition we derived. 

\begin{remark}
We can minimize the variance of the linear combination of two statistical arbitrage strategies as 
\begin{equation}
\min_{a}{a^2\sigma_1^2 + (1-a)^2\sigma_2^2 + 2a(1-a)\rho \sigma_1 \sigma_2},
\end{equation}
where $a\in[0,1]$, $\sigma_1$ and $\sigma_2$ is the standard deviation of $v_1(t)$ and $v_2(t)$, respectively. The optimal portfolio weights that minimize the variance are given by 
\begin{equation}
\hat{a}= \frac{\sigma_2^2-\rho \sigma_1\sigma_2}{\sigma_1^2+\sigma_2^2-2\rho \sigma_1\sigma_2}, \quad 1-\hat{a}=\frac{\sigma_1^2-\rho\sigma_1\sigma_2}{\sigma_1^1+\sigma_2^2 - 2\rho\sigma_1\sigma_2}\hbox{ for time } t.
\end{equation} 
\end{remark}

\section{Conclusion}\label{Section:Conclusion}
The statistical arbitrage opportunities can be considered as riskless profit opportunities in the limit. The existence of statistical arbitrage opportunities and the admissible set of such trading opportunities in an economy are closely related to the inefficiency of the market. Whenever identified by traders, statistical arbitrage opportunities can be exploited and this helps the market to move towards efficiency. 

In this study, we derive the no-statistical arbitrage condition in the Black-Scholes model given by $0<\alpha-r_f<\sigma^2/2$, which implies that the Sharpe ratio of any given stock must be bounded by $\sigma/2$. We showed that if there are inefficiencies in the market then an investor can utilize statistical arbitrage opportunities in the Black-Scholes framework. We design trading strategies by introducing a stopping boundary that assures the existence of statistical arbitrage profits. 

There are various future research directions as a result of our study. First one can consider extensions of the Black-Scholes model and derive no-statistical arbitrage conditions for more general models. Furthermore, our results can be extended for the portfolios of stocks and the optimal statistical arbitrage strategies can be designed by minimizing the variance of the statistical arbitrage portfolios.  








\section{Acknowledgements}
I would like to thank Professor Giray Okten at Florida State University for his valuable comments that improved the manuscript.

\end{document}